\documentclass[11pt,reqno]{article}

\usepackage[margin=1.5in, paperwidth=8.5in, paperheight=11in]{geometry}
\usepackage{amsfonts, amsmath, amssymb}
\usepackage{amsthm}
\usepackage{graphicx}
\usepackage{xcolor}
\usepackage[hyphens,spaces,obeyspaces]{url}
\usepackage{float}
\usepackage{physics}
\usepackage{bbm}
\usepackage{underscore}
\usepackage{array}
\usepackage{longtable}
\usepackage{caption}
\usepackage{qcircuit}
\usepackage{xy}
\usepackage{color}
\definecolor{darkblue}{rgb}{0,0,0.5}
\definecolor{darkgreen}{rgb}{0,0.5,0}
\usepackage[
     colorlinks=true,
     linkcolor=darkblue,
     urlcolor=darkblue,
     citecolor=darkgreen
]{hyperref}

\usepackage[capitalise]{cleveref}

\newcolumntype{L}[1]{>{\raggedright\let\newline\\\arraybackslash\hspace{0pt}}m{#1}}
\newcolumntype{C}[1]{>{\centering\let\newline\\\arraybackslash\hspace{0pt}}m{#1}}
\newcolumntype{R}[1]{>{\raggedleft\let\newline\\\arraybackslash\hspace{0pt}}m{#1}}

\newtheorem{lma}{Lemma}[section]

\newtheorem{thm}[lma]{Theorem}

\newtheorem{cor}[lma]{Corollary}

\theoremstyle{definition}

\theoremstyle{definition}
\newtheorem{dfn}[lma]{Definition}

\newtheorem{rmk}[lma]{Remark}

\crefname{lma}{Lemma}{Lemmas}
\crefname{thm}{Theorem}{Theorems}
\crefname{prop}{Proposition}{Propositions}
\crefname{exmp}{Example}{Examples}
\crefname{dfn}{Definition}{Definitions}
\crefname{pos}{Postulate}{Postulates}

\newcommand{\C}{\mathbb{C}}

\newcommand{\R}{\mathbb{R}}

\newcommand{\U}{\mathcal{U}}
\newcommand{\Sym}{\text{Sym}}

\newcommand{\D}{\mathcal{D}}
\newcommand{\Ha}{\mathcal{H}}

\newcommand{\ra}{\rightarrow}
\newcommand{\id}{\text{id}}
\newcommand{\Lc}{\mathcal{L}}

\newcommand{\one}{\mathbbm{1}}

\newcommand{\Eb}{\mathop{{}\mathbb{E}}}
\newcommand{\Hom}{\text{Hom}}

\begin{document}

\title{Quantum Private Broadcasting}

\author{
Anne Broadbent \thanks{University of Ottawa, Ottawa, Canada. \{\texttt{abroadbe,cschu059\}@uottawa.ca} }
\and Carlos E.~Gonz\'alez-Guill\'en \thanks{
Universidad Polit\'ecnica de Madrid, Madrid, Spain. \texttt{carlos.gguillen@upm.es} }
\and Christine Schuknecht \footnotemark[1]
 }

\date{}
\maketitle
\begin{abstract}

In \emph{Private Broadcasting}, a single plaintext is  \emph{broadcast} to multiple recipients in an encrypted form, such that each recipient can decrypt locally.   When the message is classical, a straightforward solution is to encrypt the plaintext with a single key shared among all parties, and to send to each recipient a  \emph{copy} of the ciphertext. Surprisingly, the analogous method is insufficient in the case where the message is quantum (\emph{i.e.}~in  \emph{Quantum Private Broadcasting (QPB)}). In this work, we give three solutions to $t$-recipient Quantum Private Broadcasting ($t$-QPB) and compare them in terms of key lengths.  The first method is the \emph{independent} encryption with the  quantum one-time pad, which requires a key linear in the number of recipients, $t$. We show that the key length can be decreased to be logarithmic in $t$ by using unitary $t$-designs. Our main contribution is to show that this can be improved to a key length that is logarithmic in the dimension of the symmetric subspace, using a new concept that we define of \emph{symmetric unitary $t$-designs}, that may be of independent interest.
\end{abstract}

\section{Introduction}

 How can we securely and efficiently broadcast a single message to $t$ recipients, such that the message is information-theoretically secure? In the case of an $n$-bit message~$m$, this can be achieved by the use of the one-time pad, where a \emph{single} key $k \in \{0,1\}^n$ is initially sampled uniformly at random, and distributed to all parties. Each party then receives the ciphertext
  $m \oplus k$, and can decrypt using their knowledge of~$k$; the cost of this method (in terms of key length) is therefore independent of~$t$.  In the case where the plaintext is a \emph{quantum} message, the situation is much more complex and counterintuitive. In fact, as we show, the use of the quantum one-time pad in an analogous fashion to above is in general not secure.

\subsection{Summary of Results}

We  formally define the $t$-recipient Quantum Private Broadcasting ($t$-QPB) problem in the information-theoretic setting, and show three methods to achieve it. Along the way, we define a new notion of \emph{designs}, applicable to the scenario where the input is in the symmetric subspace; we call these \emph{symmetric unitary $t$-designs} and note that these may be of independent interest beyond~$t$-QPB.

The first straightforward solution to the $t$-QPB problem is the encryption of each copy of the plaintext with the quantum one-time pad, using independent keys. This requires a key of length linear in $t$, the number of recipients, and is secure even if the adversary holds quantum side-information about the plaintext.  We observe however that this solution does not make use of the full structure of the problem, namely that each recipient receives the \emph{same} plaintext.

A unitary $t$-design~\cite{DCEL09,RS09} is a  finite set of unitary matrices, together with  a probability distribution, such that averaging up to $t$ copies of the same unitary over the design is equivalent to integrating up to $t$ copies of the same unitary over the whole unitary group with respect to the Haar measure. Intuitively, since encryption can be achieved with a Haar random unitary, it follows that unitary $t$-designs are a $t$-QPB scheme. Since the key length required for unitary $t$-designs is logarithmic in~$t$, this offers an exponential improvement in key length compared to the first solution.  Moreover, we show that unitary $t$-designs are secure against quantum side information, as long as the state to be encrypted is in the symmetric subspace. We can ensure the input state is always in the symmetric subspace by implementing a pre-broadcasting stage. This projects the state into the symmetric subspace, aborting the encryption protocol if the resulting state is not symmetric.

Our final solution takes full advantage of the structure of the $t$-QPB problem, and defines \emph{symmetric unitary $t$-designs} as a relaxation of unitary $t$-designs that mimic the action of the Haar measure \emph{on the symmetric subspace}. We show that, up to some reasonable assumptions, these are necessary and sufficient as $t$-QPB schemes and that they yield a key length logarithmic in $d_{\Sym}$ (the dimension of the symmetric subspace); this is still logarithmic in $t$, but with a smaller constant than the key length of encryption schemes derived from unitary $t$-designs. We also provide lower and upper bounds for both exact and approximate symmetric unitary $t$-designs with respect to~$d_{\Sym}$. This lower bound of $d_{\Sym}^2$ for exact symmetric unitary $t$-designs corresponds to the number of unitaries needed to perform the quantum one-time pad in the symmetric subspace, which is the $t$-QPB problem without the local decryption requirement.

We use the bounds for the size of weighted unitary $t$-designs as proven in~\cite{RS09} to compare the key length of a design as opposed to $t$ uses of the quantum one-time pad (QOTP). We compare the results for the qubit case in \cref{3key20},  which shows that when $t>5$, symmetric designs are a better choice than the QOTP, while it takes until $t>6$ for regular designs to be better than the QOTP. (The data for \cref{3key20} is given in \cref{sec:appendix-data}.)

\begin{figure}[H]
\begin{center}

\includegraphics[scale=.83]{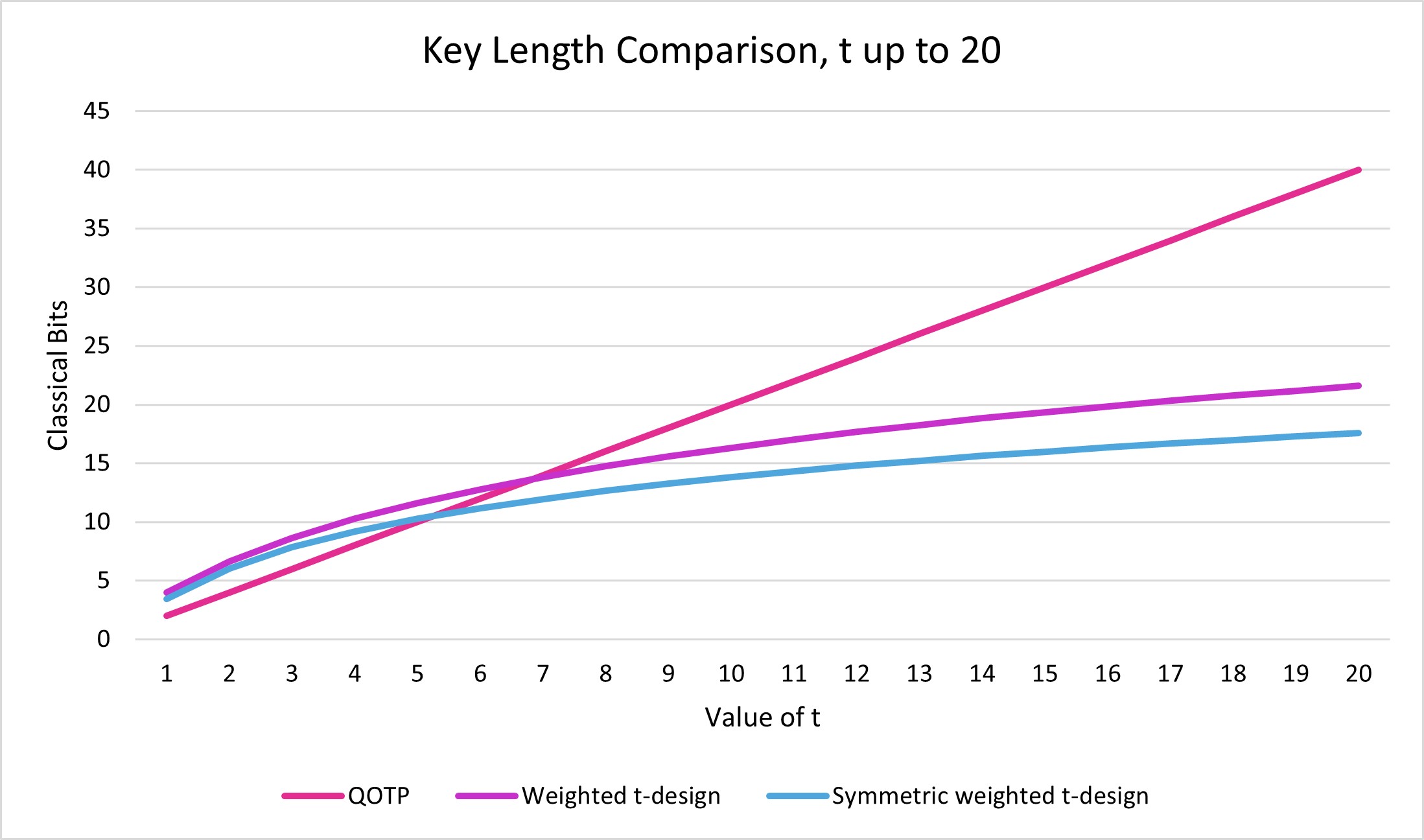}
\caption{
\label{3key20}
QOTP, Weighted $t$-design \& Symmetric Weighted $t$-design, $t\leq 20$, $d=2$}
\label{fig1}
\end{center}
\end{figure}

\subsection{Further Related Work}

Unitary designs were studied as 2-designs~\cite{GAE07,DCEL09} and as  $t$-designs~\cite{AE07}. Follow-up work on $t$-designs includes \cite{RS09,CLLW16,LM20,AMR20,BNOZ20}. Unitary 1-designs are known to yield perfect encryption schemes and unitary 2-designs yield non-malleable encryption schemes~\cite{ABW09} (see also \cite{AM17}). Note that \cite{LM20} considers the  approximate case for unitary 2-designs and their link to approximate non-malleable encryption schemes. Recent work~\cite{BSZ20}  shows that Haar random unitaries  allow a private quantum channel to be implemented with multi-photon pulses, and shows that $t$-designs can be used to practically implement such channels when the parity of the photon source is fixed.

To the best of our knowledge, there do not appear to be efficient constructions of exact unitary $t$-designs for $t>3$, although there has been recent work completed regarding such constructions \cite{BNOZ20,NZO+21}. However, it has been shown that $\epsilon$-approximate unitary $t$-designs on $n$ qubits can be efficiently constructed with local random circuits that are polynomial in $n$, $t$ and $\log(1/\epsilon)$ \cite{BHH16a}. In this work, we use the construction of an $\epsilon$-approximate unitary $t$-design from~\cite{LM20}, where they prove an upper bound for when the unitaries are sampled from an exact $t$-design (Theorem 3.1). They show that when at most $C(td)^t(t\log d)^6/\epsilon^2$ unitaries are sampled from a $t$-design for some constant $C$, then this is an $\epsilon$-approximate unitary $t$-design with probability at least~$\frac{1}{2}$.

We note that recent work on  \emph{private communication over quantum broadcast channels}~\cite{QSW18} considers a different scenario, where recipients are \emph{legitimate} or \emph{malicious}; this differs from our work of broadcasting the same encrypted message to multiple recipients, who must then locally decrypt.

\subsection{Open Problems}
We leave as an open problem further applications of  symmetric unitary $t$-designs, and it would be interesting to see if $t$-designs can be relaxed in similar ways with other subspaces. Relaxing the correctness of the $t$-QPB problem to further improve the key length is left to further research. It is left open whether the techniques used to reduce the circuit depth needed for approximate unitary $t$-designs \cite{HM18,MGDM19,HMMH+20} can be applied to approximate symmetric unitary $t$-designs. We also note that we attain security against side information with $t$-designs by restricting our input of the broadcasting protocol to be in the symmetric subspace, and we leave as an open problem whether there is another solution to $t$-QPB that has the same security and similar key length but with fewer restrictions.

\section{Preliminaries}

In this section, we present the basic notation used throughout this paper. We define unitary $t$-designs, recall the known upper and lower bounds on their size, and briefly define and explain the symmetric subspace and concepts needed from representation theory.

\subsection{Basic Notation}

Let $\Ha_{d^n}$ be the Hilbert space of dimension $d^n$ which is spanned by the basis states $\{\ket{x}: x\in \{0,1,\dots,d-1\}^n\}$. Let $\D(\Ha_{d^n})$ be the set of density operators and $\Lc(\Ha_{d^n})$ be the set of linear operators on~$\Ha_{d^n}$. A Hilbert space of subsystems, say $M$ and $E$, is denoted with subscripts,~$\Ha_{d^n} = \Ha_{M}\otimes \Ha_{E}$. Density operators on such Hilbert spaces are written as $\rho_{ME}$, and~$\rho_E$ denotes when subsystem $M$ is traced out from~$\rho_{ME}$. Transformations between quantum states are formalized by quantum channels, that is, completely positive trace preserving maps. Determining the distinguishability of the outputs from two such channels $\Psi,\Phi: \Lc(\Ha_M) \ra \Lc(\Ha_M)$ is done with the trace norm $||\cdot||_1$, where $||A||_1 = \Tr(\sqrt{AA^{\dagger}})$ for linear operator~$A$. This trace norm is the sum of the singular values of~$A$, while the infinity norm~$||\cdot||_{\infty}$ is the maximum singular value. The quantum channels themselves are compared with the diamond norm $||\cdot||_{\diamond}$, which is the maximum trace norm when an auxiliary space is considered, along with the original Hilbert spaces \cite{Wat11b,BS10}. For example, $||\Psi-\Phi||_{\diamond} = \max_{\rho_{ME}} ||(\Psi\otimes \one_E)\rho_{ME} - (\Phi\otimes\one_{E})\rho_{ME}||_1$. This is considered a better determination of the distinguishability of two quantum channels than the $1\rightarrow 1$ norm, that is, $||\Psi-\Phi||_{1\rightarrow 1} = \max_{\rho_M} ||\Psi (\rho_M)-\Phi(\rho_M)||_1$, because it accounts for the original space $\Ha_M$ being entangled with another auxiliary space~$\Ha_E$.

The notation $\U(d)$ denotes the unitary group of all $d\times d$ unitaries. The Pauli matrices for 2-qubits are defined as
\begin{equation}
\one =
\begin{pmatrix}
1 & 0 \\
0 & 1
\end{pmatrix}
, \;\;\; X =
\begin{pmatrix}
0 & 1 \\
1 & 0
\end{pmatrix}
, \;\;\; Y =
\begin{pmatrix}
0 & -i \\
i & 0
\end{pmatrix}
, \;\;\; Z =
\begin{pmatrix}
1 & 0 \\
0 & -1
\end{pmatrix}.
\end{equation}

The quantum one-time pad (QOTP) is defined in the following way for $\varphi\in\D(\Ha_{2^n})$ and $a,b\in\{0,1\}^n$
\begin{equation}
\mathsf{QOTP}_{a,b} (\varphi) = X^aZ^b\varphi Z^bX^a,
\end{equation}
where $X$ and $Z$ are Pauli operators. The quantum one-time pad is perfectly secure as defined in \cref{defn:QPB-indistinguishable}, where the input need not be restricted to the symmetric subspace.  This is because
\begin{equation}
\Eb_{a,b\in\{0,1\}^n} \mathsf{QOTP}_{a,b}(\varphi) = \frac{\one}{2^{2n}}.
\label{QOTP}
\end{equation}
It can be shown that the quantum one-time pad is also perfectly secure against adversaries with side information (an auxiliary space). This is because $(\Eb_{a,b}\mathsf{QOTP}_{a,b}\otimes\one_E)\varphi_{ME}= \frac{\one}{2^{2n}}\otimes \varphi_E$, where~$\varphi_{ME}\in \D(\Ha_{2^{2n}}\otimes\Ha_E)$.\footnote{This can be generalized to $\Ha_{d^n}$ with the generalized Pauli group.}

\subsection{Unitary $t$-designs}

We use the definition in \cite{RS09} for unitary $t$-designs, which we adapt to our notation.

\begin{dfn}
	Let $\{U_k\}_{k\in K}$ be a finite subset of $\U(d)$ and let $w:\{U_k\}_{k\in K}\ra \R$ be a positive weight function such that $w(U_k)\geq 0$, $\sum_{k\in K}w(U_k)=1$. Then $\mathfrak U =\left(w,\{U_k\}_{k\in K}\right)$ is called a \emph{unitary $t$-design} if
	\begin{equation}\label{design}
		\mathbb E_{\,\mathfrak U} \left[U^{\otimes t} \otimes (U^{\dagger})^{\otimes t}\right] =\sum_{k\in K} w(U_k)\cdot U_k^{\otimes t} \otimes (U_k^{\dagger})^{\otimes t} = \int_{\U(d)} U^{\otimes t} \otimes (U^{\dagger})^{\otimes t} dU,
	\end{equation}
	where the integral is over the whole unitary group with respect to the Haar measure.
	\label{design def}
\end{dfn}

When $w(U_k) = \frac{1}{|K|}$ for every $U_k$, this is an \emph{unweighted} unitary $t$-design. Otherwise, it is a \emph{weighted} unitary $t$-design. The known lower and upper bounds on the number of unitaries needed (\emph{i.e.}~$|K|$) for exact unitary $t$-designs for general $t$ and dimension $d$ are shown in \cref{bounds}.

\begin{table}[H]
\centering
\begin{tabular}{|l|l|l|}
\hline
& \textbf{Lower} & \textbf{Upper} \\ \hline
\textbf{Weighted} & ${{d^2+t-1}\choose{t}}\in \Omega(t^{d^2-1})$ \cite{RS09} & ${{d^2+t-1}\choose{t}}^2\in O(t^{2(d^2-1)})$ \cite{RS09} \\ \hline
\textbf{Unweighted} & ${{d^2+t-1}\choose{t}}\in \Omega(t^{d^2-1})$ \cite{RS09} & $\left(\frac{e(d^2+t-1)}{t}\right)^{2t}$ \cite{AMR20} \\ \hline
\end{tabular}
\caption{Known bounds on the number of unitaries for unitary $t$-designs}
\label{bounds}
\end{table}

There are also approximate unitary $t$-designs, defined as follows.

\begin{dfn}
	Let $\{U_k\}_{k\in K}$ be a finite subset of $\U(d)$ and let $w:\{U_k\}_{k\in K}\ra \R$ be a positive weight function such that $w(U_k)\geq 0$, $\sum_{k\in K}w(U_k)=1$. Then $\mathfrak U =\left(w,\{U_k\}_{k\in K}\right)$ is called an \emph{$\epsilon$-approximate unitary $t$-design} if
\begin{equation}
\left\Vert \mathbb E_{\,\mathfrak U} \left[\mathcal E_{U_k}^{(t)} \right] - T^{(t)} \right\Vert_{ 1\rightarrow 1}  < \epsilon,
\label{approx design}
\end{equation}
where  $T^{(t)}$ is the $t$-twirling channel $T^{(t)}(\rho)=\int_{\U(d)} U^{\otimes t}\rho (U^{\dagger})^{\otimes t} dU$ and $ \mathcal E_{U_k}^{(t)}(\rho)= U_k^{\otimes t}\rho(U_k^{\dagger})^{\otimes t}$ for $\rho \in \D(\Ha_{d^t})$.
\end{dfn}

Note that there are other definitions of $\epsilon$-approximate unitary $t$-designs depending on the norm used in \cref{approx design}. We use the $1\rightarrow 1$ norm as it is the one needed for our application.

\subsection{Symmetric Subspace}
As defined similarly in \cite{Har13}, the symmetric subspace for quantum states in $\Ha_d^{\otimes t}$ is the following:
\begin{equation}
\Sym(d^t) := \{ \ket{\phi} \in (\Ha_d)^{\otimes t} : P_d(\pi)\ket{\phi} = \ket{\phi}, \forall \pi \in S_t \},
\end{equation}
where
\[ P_d(\pi) = \sum_{i_1,\dots,i_t\in [d]} \ketbra{i_{\pi^{-1}(1)},\dots,i_{\pi^{-1}(t)}}{i_1,\dots,i_t}, \]
for $[d] = \{0,\dots,d-1\}$ integers and $\pi\in S_t$, the symmetric group for $t$ elements, \emph{i.e.}~all the permutations of $t$ elements. The dimension for this subspace is $d_{\Sym} = {{d+t-1}\choose{t}}$ \cite{Har13}. The notation $\U(\Sym(d^t))$ denotes unitaries from $\Sym(d^t)\otimes\Sym(d^t)$ of size $d_{\Sym}\times d_{\Sym}$, in the same way that $\U(d)$ denotes unitaries from $\Ha_d\otimes \Ha_d$ of size $d\times d$. The notation $\D(\Sym(d^t))$ is for the density operators on $\Sym(d^t)$. One can write density matrices in the symmetric subspace as a real linear combination of rank $1$ density matrices \cite{Har13}, that is,
\begin{equation}\label{DensitySym}
	\D(\Sym(d^t))\subset \mathrm{span}_{\mathbb R}  \{(\ketbra{\varphi}{\varphi})^{\otimes t}:\ket \varphi\in \Ha_d\}.
\end{equation}

\subsection{Representation Theory}
\label{sec:rep theo}

Using Schur-Weyl duality and Schur's Lemma \cite{FH91} similarly to \cite{LM20}, one can write the following for $\rho\in\D(\Ha_d^{\otimes t})$:
\begin{equation}
\int_{\U(d)} U^{\otimes t} \rho (U^{\dagger})^{\otimes t} dU = \tr(\Pi_{\Sym}\rho \Pi_{\Sym})\tau_{\Sym} + \sum_b \tr(\Pi_b\rho \Pi_b)\tau_b,
\label{SWd}
\end{equation}
where $\Pi_{\Sym}$ is the projector into $\Sym(d^t)$ and $\tau_{\Sym} = \frac{\Pi_{\Sym}}{d_{\Sym}}$. These $\Pi_b$ are projectors into subspaces orthogonal to the symmetric subspace which have dimension $d_b$, and $\tau_b = \frac{\Pi_b}{d_b}$. When $\rho\in\D(\Sym(d^t))$, this reduces to $\tau_{\Sym}$.

\section{Definitions for Quantum Private Broadcasting}
\label{qpb sec}

Here we define the semantics of a \emph{$t$-recipient Quantum Private Broadcast} scheme, ($t$-QPB) along with its security definitions. We also make an observation relating $t$-QPB schemes to $(t-1)$-QPB schemes with perfect security and correctness.
\begin{dfn}
	Let $\Ha_M=\Ha_d$ and $\Ha_C$ be the message and ciphertext Hilbert spaces, respectively. A \emph{$\delta$-correct, $t$-recipient Quantum Private Broadcast scheme in~$\Ha_{M}$} is a set of encryption maps $\mathsf{Enc}_k: \Ha_M^{\otimes t} \ra \Ha_C^{\otimes t}$ along with decryption maps $\mathsf{Dec}_k: \Ha_C \ra \Ha_M$, where  $k \in K$, the set of possible keys. We require that for each $k \in K$,
 $\| \left. (\mathsf{Dec}_k^{\otimes t} \circ \mathsf{Enc}_k)\right|_{\Sym(d^t)}- \one_{\Sym(d^t)}\|_{\diamond} \leq 1-\delta$, where the notation~$\left. \right|_{\Sym(d^t)}$ denotes that the input messages are restricted to being elements of $\Sym(d^t)$, and~$\one_{\Sym}$ is the identity map in $\Sym(d^t)$.

We note that a $1$-correct $t$-QPB, (that is, a perfect $t$-QPB) must necessarily be implemented via unitary matrices. Moreover, in this case, as the definition imposes local identical decryption, the decryption operation needs to be the $t$-fold tensor product of a unitary matrix. Thus, although the encryption maps are not necessarily $t$-fold tensor products of a unitary matrix, the action of each of them over the symmetric subspace can be written as a $t$-fold tensor product of a unitary matrix. Such a perfect $t$-QPB is illustrated in~\cref{QPBfigure}.
\end{dfn}

\begin{figure}[]
\centerline{
\Qcircuit @C=1em @R=1.6em {
&&&& \mbox{Encrypt} &&&& \mbox{Decrypt} &&&& \\
\lstick{} & \ket{\varphi} & & \qw & \gate{U_k} & \qw & \multigate{3}{\text{Send to each recipient}} & \qw & \gate{U_k^{\dagger}} & \qw & \ket{\varphi} \\
\lstick{} & \ket{\varphi} & & \qw & \gate{U_k} & \qw & \ghost{\text{Send to each recipient}} & \qw & \gate{U_k^{\dagger}} & \qw & \ket{\varphi} \\
\lstick{} & \vdots & & & \vdots & & \ghost{\text{Send to each recipient}} & \qw & \vdots & & \vdots \\
\lstick{} & \ket{\varphi} & & \qw & \gate{U_k} & \qw & \ghost{\text{Send to each recipient}} & \qw & \gate{U_k^{\dagger}} & \qw & \ket{\varphi}
\inputgroupv{2}{5}{.8em}{4.6em}{t} \\
}
}
\caption{Quantum Private Broadcasting}
\label{QPBfigure}
\end{figure}
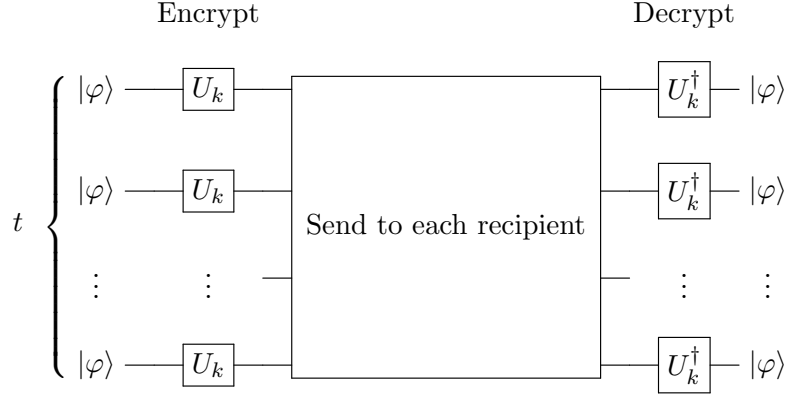
The indistinguishability of ciphertexts for our $t$-QPB encryption scheme is based on the definitions from \cite{LM20}, which compares the encryption scheme with that of a `state replacement channel' $\langle\sigma\rangle$. For a fixed $\sigma \in \D(\Ha_d^{\otimes t})$, this is defined as $\langle\sigma\rangle (R) = \Tr(R)\sigma$, for any $ R\in \D(\Ha_d^{\otimes t})$.

\begin{dfn}
\label{defn:QPB-indistinguishable}
Let $K$ be the set of possible keys in the $t$-QPB. A $t$-QPB has \emph{$\epsilon$-indistinguishable ciphertexts} if there exists a fixed $\sigma \in \D(\Ha_d^{\otimes t})$ such that
\begin{equation}
\left\Vert \left.\left(\Eb_{k\in K} \mathsf{Enc}_k - \langle\sigma\rangle\right)\right|_{\Sym(d^t)} \right\Vert_{1\rightarrow 1} \leq \epsilon.
\end{equation}
We note that the above does not consider quantum side information. The encryption scheme has \emph{$\epsilon$-indistinguishable ciphertexts against adversaries with side information}~if
\begin{equation}
\left\Vert \left.\left(\Eb_{k\in K}  \mathsf{Enc}_k - \langle\sigma\rangle\right)\right|_{\Sym(d^t)} \right\Vert_{\diamond} \leq \epsilon.
\end{equation}
Indistinguishability against adversaries with side information necessarily implies indistinguishability since the $1\ra 1$-norm is upper bounded by the $\diamond$-norm.

When the above norms are equal to zero, we call such encryption schemes \emph{perfectly secure} and \emph{perfectly secure against adversaries with side information}, respectively.
\end{dfn}

When $t=1$, \cref{defn:QPB-indistinguishable} corresponds to the conventional information theoretic encryption~\cite{LM20}, where there is no restriction in the input space.

The following lemma follows naturally from the setting where $t$-copies of a pure quantum state are used as the input for a $t$-QPB.

\begin{lma}
Let $\mathsf{Enc}_k^{(t)}:\Ha_M^{\otimes t}\ra\Ha_C^{\otimes t}$ and $\mathsf{Dec}_k:\Ha_C\ra\Ha_M$, defined as $\mathsf{Enc}_k^{(t)}(\rho) = U_k^{\otimes t}\rho (U_k^{\otimes t})^{\dagger}$ and $\mathsf{Dec}_k(\gamma) = U_k^{\dagger}\gamma U_k$, respectively. Let $(\mathsf{Enc}_k^{(t)},\mathsf{Dec}_k)$ be a perfectly secure and perfectly correct $t$-QPB scheme. Then $(\mathsf{Enc}_k^{(t-1)},\mathsf{Dec}_k)$ is a perfectly secure and perfectly correct $(t-1)$-QPB scheme.\label{QPBt}
\end{lma}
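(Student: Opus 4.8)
The plan is to show that restricting a perfect $t$-QPB of the specified tensor-unitary form to $t-1$ copies still satisfies both the correctness and the security requirements of \cref{defn:QPB-indistinguishable}. Correctness is immediate: since $\mathsf{Dec}_k^{\otimes t}\circ\mathsf{Enc}_k^{(t)} = \one_{\Sym(d^t)}$ and both maps are tensor powers of a fixed unitary $U_k$, removing one tensor factor gives $\mathsf{Dec}_k^{\otimes(t-1)}\circ\mathsf{Enc}_k^{(t-1)} = (U_k^\dagger U_k)^{\otimes(t-1)} = \one$, which in particular equals the identity when restricted to $\Sym(d^{t-1})$. So the $(t-1)$-scheme is perfectly correct.

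For security, I would argue via the partial trace. Perfect security of the $t$-scheme gives a fixed $\sigma\in\D(\Ha_d^{\otimes t})$ with $\Eb_{k} U_k^{\otimes t}\rho(U_k^{\dagger})^{\otimes t} = \Tr(\rho)\,\sigma$ for all $\rho\in\D(\Sym(d^t))$. The key step is to feed in inputs of the form $\rho = (\ketbra{\varphi}{\varphi})^{\otimes t}$: these lie in $\Sym(d^t)$, and by \cref{DensitySym} such states span (over $\R$) a set containing all of $\D(\Sym(d^t))$, so it suffices to verify the $(t-1)$-identity on them. Applying the $t$-copy security relation to $(\ketbra{\varphi}{\varphi})^{\otimes t}$ and tracing out the last system, linearity of the partial trace and of $\mathsf{Enc}_k$ gives $\Eb_k U_k^{\otimes(t-1)}(\ketbra{\varphi}{\varphi})^{\otimes(t-1)}(U_k^{\dagger})^{\otimes(t-1)} = \Tr_t(\sigma)$ for every $\ket\varphi$. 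Setting $\sigma' := \Tr_t(\sigma)\in\D(\Ha_d^{\otimes(t-1)})$, and using that an arbitrary $\rho'\in\D(\Sym(d^{t-1}))$ is a real linear combination of such rank-one tensor powers with coefficients summing to $\Tr(\rho')=1$, we conclude $\Eb_k \mathsf{Enc}_k^{(t-1)}(\rho') = \Tr(\rho')\,\sigma' = \langle\sigma'\rangle(\rho')$, which is exactly perfect security for the $(t-1)$-scheme.

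One should also note that $\mathsf{Enc}_k^{(t-1)}$ and $\mathsf{Dec}_k$ have the required form (tensor powers of a unitary and a single unitary conjugation), so the resulting object is again a legitimate $(t-1)$-QPB scheme in the sense of the definitions, with the same key set $K$ and weights. The side-information variant is not needed here since the definition of perfect security in \cref{defn:QPB-indistinguishable} that we are invoking uses the $1\ra1$ norm, but the same partial-trace argument would go through in the diamond norm as well, since tracing out a tensor factor commutes with tensoring on an auxiliary space $\Ha_E$.

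I expect the only genuinely delicate point to be the reduction from arbitrary symmetric density matrices to the rank-one tensor powers $(\ketbra{\varphi}{\varphi})^{\otimes t}$: one must be careful that \cref{DensitySym} gives a real-linear spanning set (with possibly negative and non-normalized coefficients), and then check that linearity of all maps involved together with the trace-tracking bookkeeping ($\sum_j c_j = 1$ when the combination is a density matrix) lets us push the identity verified on the spanning set to all of $\D(\Sym(d^{t-1}))$. Everything else is a routine unwinding of tensor-power algebra and the partial trace.
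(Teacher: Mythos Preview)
Your proposal is correct and follows essentially the same approach as the paper: verify correctness directly from the tensor-unitary structure, then for security evaluate the averaged $t$-copy encryption on $(\ketbra{\varphi}{\varphi})^{\otimes t}$, take a partial trace to recover the $(t-1)$-copy identity, and extend to all of $\D(\Sym(d^{t-1}))$ by the real-linear span property \cref{DensitySym}. The only cosmetic difference is that the paper identifies the fixed state as $\tau_{\Sym,t}$ and uses $\tr_1(\tau_{\Sym,t})=\tau_{\Sym,t-1}$, whereas you work with a generic $\sigma$ and set $\sigma'=\Tr_t(\sigma)$; your version is slightly more general but the argument is the same.
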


\begin{proof}
By definition of encoding and decoding maps it is clear that for any $\rho \in \D(\Ha_d^{\otimes t-1}\otimes \Ha_A)$, we have $\left(\mathsf{Dec}_k^{\otimes t-1} \circ \mathsf{Enc}^{(t-1)}_k\right)\otimes \id_A(\rho)=\rho$ and thus \newline $ \left\Vert \left. \left(\mathsf{Dec}_k^{\otimes t-1} \circ \mathsf{Enc}_k^{(t-1)}\right)\right|_{\Sym(d^{t-1})}- \one_{\Sym(d^{t-1})}\right\Vert_{\diamond} = 0$ showing correctness.

Let $\rho = (\ketbra{\varphi}{\varphi})^{\otimes t-1}$, with $\ket \varphi\in \Ha_d$, then we have
$$\Eb_{k\in K} \mathsf{Enc}^{(t-1)}_k (\rho) =\tr_1\left(\Eb_{k\in K} \mathsf{Enc}^{(t)}_k \left(\rho\otimes \ketbra{\varphi}{\varphi}\right) \right) =\tr_1\left(\tau_{\Sym,t}\right)=\tau_{\Sym,t-1},$$ where the first equality follows from linearity, and the second follows from the definition of a perfectly correct $t$-QPB scheme. We use the notation $\tau_{\Sym,t}$ to make explicit that it is the maximally mixed state in $\D(\Sym(d^t))$. Moreover, using \cref{DensitySym} and linearity, we know that this equation holds for any $\rho\in\D(\Sym(d^t))$. Thus,
\begin{equation*}
	\left\Vert \left.\left(\Eb_{k\in K} \mathsf{Enc}^{(t-1)}_k - \langle\tau_{\Sym,t-1} \rangle\right)\right|_{\Sym(d^{t-1})} \right\Vert_{1\rightarrow 1} =0.
	\end{equation*}
\end{proof}

\section{Limitations on the Quantum One-Time Pad}
\label{qotp section}

When considering classical encryption, the one-time pad (OTP) can only be used once to encrypt a plaintext message since the exclusive-or (XOR) of the ciphertexts resulting from encrypting different plaintexts reveals information about these plaintexts. However, if the OTP is used to encrypt two (or more) identical plaintexts, their ciphertexts will also be identical and the XOR of these ciphertexts is the zero string. This reveals nothing about the original plaintext, and therefore is still information-theoretically secure.

Since classical messages are a special case of quantum messages, the QOTP should also only be used once to encrypt a plaintext quantum state for the same reasons as the OTP. However, when the QOTP is used to encrypt two copies of the same quantum state, this is no longer information-theoretically secure, as illustrated in the following theorem.

\begin{thm}
 $\mathsf{QOTP}_{a,b}\otimes\mathsf{QOTP}_{a,b}$ with the same key $a,b$ is a 1-correct, 2-recipient QPB scheme, but it does not have $\epsilon$-indistinguishable ciphertexts for any $\epsilon < 1/2$.
\end{thm}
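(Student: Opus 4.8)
Correctness is immediate and the substance is in the insecurity claim. Decryption is $\mathsf{Dec}_{a,b}(\gamma)=(X^aZ^b)^{\dagger}\gamma(X^aZ^b)$, and because $(X^{a_i})^2=(Z^{b_i})^2=\one$ we get $\mathsf{Dec}_{a,b}^{\otimes 2}\circ(\mathsf{QOTP}_{a,b}\otimes\mathsf{QOTP}_{a,b})=\id^{\otimes 2}$ on all of $\Ha_{2^n}^{\otimes 2}$; restricted to $\Sym((2^n)^2)$ this equals $\one_{\Sym}$, so its diamond distance to $\one_{\Sym}$ is $0$ and the scheme is $1$-correct.

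For insecurity, the plan is: (1) identify the averaged encryption channel; (2) feed it two symmetric inputs whose ciphertexts are perfectly distinguishable; (3) turn this into a lower bound on $\|(\Eb_k\mathsf{Enc}_k-\langle\sigma\rangle)|_{\Sym}\|_{1\to 1}$ that holds for \emph{every} replacement state $\sigma$. For (1): the conjugation maps $\rho\mapsto(X^aZ^b)^{\otimes 2}\rho((X^aZ^b)^{\otimes 2})^{\dagger}$ agree, up to phases that cancel under conjugation, with the diagonal Pauli conjugations, so uniform averaging over $a,b\in\{0,1\}^n$ yields $\mathcal N:=\Eb_{a,b}\,\mathsf{QOTP}_{a,b}\otimes\mathsf{QOTP}_{a,b}=\bigotimes_{j=1}^n\Phi_j$, where $\Phi(\rho)=\tfrac14\sum_{P\in\{\one,X,Y,Z\}}(P\otimes P)\rho(P\otimes P)$ acts on the pair consisting of qubit $j$ of the two recipients. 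The commuting operators $\one\otimes\one,\,X\otimes X,\,Y\otimes Y,\,Z\otimes Z$ have the Bell basis $\{\ket{\Phi^{\pm}},\ket{\Psi^{\pm}}\}$ as a joint eigenbasis with pairwise distinct eigenvalue patterns, so $\Phi$ is exactly the dephasing channel in the Bell basis; in particular $\Phi(\ketbra{00}{00})=\tfrac12(\ketbra{00}{00}+\ketbra{11}{11})$ while $\Phi$ fixes $\ketbra{\Psi^+}{\Psi^+}$.

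For (2): let $\rho_0=\ketbra{\psi_0}{\psi_0}$ with $\ket{\psi_0}=\ket{0^n}\otimes\ket{0^n}$, and $\rho_1=\ketbra{\psi_1}{\psi_1}$ with $\ket{\psi_1}=\bigotimes_{j=1}^n\ket{\Psi^+}_j$, where $\ket{\Psi^+}_j$ pairs qubit $j$ of recipient $1$ with qubit $j$ of recipient $2$. Both are invariant under swapping the two recipient registers, hence lie in $\Sym((2^n)^2)$. By (1), $\mathcal N(\rho_0)$ is supported on $\mathrm{span}\{\ket{x}\otimes\ket{x}:x\in\{0,1\}^n\}$, whereas $\mathcal N(\rho_1)=\rho_1$; and projecting $\ket{\psi_1}$ onto $\ket{x}$ in the first register leaves the second in $\ket{\bar x}$ (bitwise complement), orthogonal to $\ket{x}$, so $\mathcal N(\rho_0)$ and $\mathcal N(\rho_1)$ have orthogonal supports and $\|\mathcal N(\rho_0)-\mathcal N(\rho_1)\|_1=2$. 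For (3): for any $\sigma\in\D(\Ha_{2^n}^{\otimes 2})$ we have $\langle\sigma\rangle(\rho_i)=\sigma$, hence $\|\mathcal N(\rho_i)-\sigma\|_1\le\|(\Eb_k\mathsf{Enc}_k-\langle\sigma\rangle)|_{\Sym}\|_{1\to 1}$ for $i=0,1$; combining with the triangle inequality, $2=\|\mathcal N(\rho_0)-\mathcal N(\rho_1)\|_1\le\|\mathcal N(\rho_0)-\sigma\|_1+\|\sigma-\mathcal N(\rho_1)\|_1\le 2\|(\Eb_k\mathsf{Enc}_k-\langle\sigma\rangle)|_{\Sym}\|_{1\to 1}$. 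Thus this norm is $\ge 1$ for every $\sigma$, so the scheme has no $\epsilon$-indistinguishable ciphertexts for any $\epsilon<1/2$ (in fact for any $\epsilon<1$); the weaker constant $1/2$ in the statement already follows by using, in each register, the honest broadcast states $\ket{0}^{\otimes n}$ versus $\ket{+}^{\otimes n}$.

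I expect no serious obstacle. The points requiring care are verifying that $\rho_0,\rho_1$ genuinely lie in the symmetric subspace (swap-invariance of $\ket{\psi_1}$ once the qubit pairing is fixed), and making the distinguishing bound uniform in $\sigma$ — which is exactly what the triangle-inequality step in (3) accomplishes. The identification of $\mathcal N$ with Bell-basis dephasing is the routine fact that a twirl over a finite abelian group is the pinching onto its joint eigenspaces, which here are one-dimensional.
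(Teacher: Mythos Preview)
Your proof is correct. The paper takes the simpler of the two routes you describe: it works in the single-qubit case ($n=1$), feeds in the product broadcast states $\rho_0=\ketbra{0}{0}^{\otimes 2}$ and $\rho_1=\ketbra{+}{+}^{\otimes 2}$, computes $\Eb_{a,b}\mathsf{dQOTP}_{a,b}(\rho_0)=\tfrac12(\ketbra{00}{00}+\ketbra{11}{11})$ and $\Eb_{a,b}\mathsf{dQOTP}_{a,b}(\rho_1)=\tfrac12(\ketbra{++}{++}+\ketbra{--}{--})$ directly, and then applies exactly your triangle-inequality step (3) to get the bound $\tfrac12$. It does not identify the averaged channel as Bell-basis dephasing, and it does not use a non-product symmetric input. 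Your structural identification of $\mathcal N$ as $\bigotimes_j\Phi_j$ and the choice of the entangled but swap-invariant $\rho_1=\bigotimes_j\ketbra{\Psi^+}{\Psi^+}$ buy you orthogonal output supports and hence the sharper lower bound $1$ (indeed showing the scheme fails $\epsilon$-indistinguishability for all $\epsilon<1$), at the cost of a slightly longer argument. The paper's approach has the advantage that its two inputs are genuine broadcast states $\ket{\varphi}^{\otimes 2}$, which is the operationally meaningful regime even though the formal security definition only requires inputs in $\Sym(d^t)$.
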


\begin{proof}
This $\mathsf{QOTP}_{a,b}\otimes\mathsf{QOTP}_{a,b}$ can be defined as a ``double quantum one-time pad" for $\varphi, \psi \in\D(\Ha_{2})$ and $a,b\in\{0,1\}$:
\begin{equation}
\mathsf{dQOTP}_{a,b}(\varphi\otimes \psi) = X^aZ^b\otimes X^aZ^b(\varphi \otimes \psi) Z^bX^a \otimes Z^bX^a.
\end{equation}
Consider the following:
\begin{equation}
\begin{split}
\rho_0 & = \ketbra{0}{0} \otimes \ketbra{0}{0} \\
\rho_1 & = \ketbra{+}{+} \otimes \ketbra{+}{+}\,.
\end{split}
\end{equation}
Then the expectation of $\mathsf{dQOTP}_{a,b}$ applied to each state results in
\begin{equation}
\begin{split}
\Eb_{a,b}\mathsf{dQOTP}_{a,b}(\rho_0) & = \frac{1}{2}\Big(\ketbra{0}{0}\otimes \ketbra{0}{0} + \ketbra{1}{1} \otimes \ketbra{1}{1} \Big) \\
\Eb_{a,b}\mathsf{dQOTP}_{a,b}(\rho_1) & = \frac{1}{2} \Big(\ketbra{+}{+}\otimes \ketbra{+}{+} + \ketbra{-}{-}\otimes \ketbra{-}{-} \Big).
\end{split}
\end{equation}
We have that, for any state replacement channel $\langle \sigma \rangle$,
	\begin{equation*}
		\begin{aligned}
		&\left\Vert \left.\Big(\Eb_{a,b}\mathsf{dQOTP}_{a,b} - \langle\sigma\rangle\Big)\right|_{\Sym(2^2)} \right\Vert_{1\rightarrow 1} =\max_{\rho\in \D(\Sym(2^2))} \left\Vert \Eb_{a,b}\mathsf{dQOTP}_{a,b}(\rho) - \langle\sigma\rangle(\rho) \right\Vert_{1}  \\
		&\geq \frac 1 2 \left(\left\Vert \Eb_{a,b}\mathsf{dQOTP}_{a,b}(\rho_0) - \langle\sigma\rangle(\rho_0) \right\Vert_{1} + \left\Vert \Eb_{a,b}\mathsf{dQOTP}_{a,b}(\rho_1) - \langle\sigma\rangle(\rho_1) \right\Vert_{1}\right) \\
		&\geq \frac 1 2 \left\Vert \Eb_{a,b}\mathsf{dQOTP}_{a,b}(\rho_0)-\Eb_{a,b}\mathsf{dQOTP}_{a,b}(\rho_1)\right\Vert_1 \geq \frac{1}{2}. \qedhere
		\end{aligned}
	\end{equation*}
\end{proof}
Therefore, encryption with the same key is not sufficient to obtain perfect security when encrypting multiple copies of the same message. Using independent encryption keys for each copy of the message is one possible solution to this problem. However, as one can see in  \cref{qotp bounds}, this leads to the amount of unitaries needed to be exponential in $t$, the number of copies. These bounds are from the known fact that to encrypt once an $n$-qubit state, $2^{2n}$ unitaries are needed \cite{BR03}, and this bound can be extended to general $d$ with a general QOTP using generalized Pauli matrices \cite{Web16}. We denote $t$ independent uses of the quantum one-time pad as $\mathsf{QOTP}_{a_i,b_i}^{\otimes t}$, where $a_i,b_i \in \{0,1\}^n$ for $i=1,\dots,t$. In the $d$-dimensional case, $a_i,b_i \in \{0,1,\dots,d-1\}^n$.

\begin{table}[H]
\centering
\begin{tabular}{|l|l|l|}
\hline
& $\mathsf{QOTP}_{a,b}$ & $\mathsf{QOTP}_{a_i,b_i}^{\otimes t}$ \\ \hline
\textbf{Qubits $(d=2^n)$} & $d^2 = 4^n$ & $d^{2t} = 4^{nt}$ \\ \hline
\textbf{General $d^n$} & $d^{2n}$ & $d^{2nt}$ \\ \hline
\end{tabular}
\caption{Bounds on the number of unitaries for Quantum One-Time Pad for $n$~qudits}
\label{qotp bounds}
\end{table}

\section{QPB with Designs}

In this section we examine the case where unitary $t$-designs are used to solve the $t$-QPB problem. In order to maintain security against side information, we impose restrictions on the input message, specifically that it be an element of the symmetric subspace.

\begin{thm}\label{secureQPB}
Let $\mathfrak U=(w,\{U_k\}_{k\in K})$ be an $\epsilon$-approximate unitary $t$-design. Then the set of maps $\mathsf{Enc}_k(\rho)=U_k^{\otimes t}\rho(U_k^{\otimes t})^ \dagger$ and its local inverse maps $\mathsf{Dec}_k(\gamma)=U_k^ \dagger \gamma U_k$ for $k \in K$, $\rho\in\D(\Sym(d^t))$, and $\gamma\in\D(\Ha_d)$ form a perfect $t$-QPB which has $\epsilon$-indistinguishable ciphertexts. Moreover, in the case of exact unitary $t$-designs, we have a perfect $t$-QPB perfectly secure against adversaries with side information.
\end{thm}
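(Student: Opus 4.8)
The plan is to prove the three assertions of \cref{secureQPB} in turn: perfect correctness, $\epsilon$-indistinguishability of ciphertexts, and --- for exact designs --- perfect security against side information. Correctness is immediate and uses nothing about designs: for each $k\in K$ the composition $\mathsf{Dec}_k^{\otimes t}\circ\mathsf{Enc}_k$ sends $\rho\mapsto (U_k^{\otimes t})^{\dagger}U_k^{\otimes t}\rho(U_k^{\otimes t})^{\dagger}U_k^{\otimes t}=\rho$, and the same computation with an auxiliary register attached shows this persists after tensoring with the identity, so $\big\|(\mathsf{Dec}_k^{\otimes t}\circ\mathsf{Enc}_k)|_{\Sym(d^t)}-\one_{\Sym(d^t)}\big\|_{\diamond}=0$ and the scheme is $1$-correct (perfect). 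For the $1\rightarrow 1$ indistinguishability I would take the replacement state $\sigma=\tau_{\Sym}=\Pi_{\Sym}/d_{\Sym}$ and reduce directly to the approximate-design inequality \eqref{approx design}. By \eqref{SWd}, every $\rho\in\D(\Sym(d^t))$ satisfies $T^{(t)}(\rho)=\tau_{\Sym}=\Tr(\rho)\,\tau_{\Sym}=\langle\tau_{\Sym}\rangle(\rho)$, and since $\mathsf{Enc}_k=\mathcal E_{U_k}^{(t)}$ by definition, on such inputs $\Eb_k\mathsf{Enc}_k-\langle\tau_{\Sym}\rangle$ agrees with $\Eb_k\mathcal E_{U_k}^{(t)}-T^{(t)}$. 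Restricting the $1\rightarrow 1$ norm to inputs in $\D(\Sym(d^t))\subseteq\D(\Ha_{d^t})$ can only decrease it, so
\[
\Big\|\big(\Eb_k\mathsf{Enc}_k-\langle\tau_{\Sym}\rangle\big)\big|_{\Sym(d^t)}\Big\|_{1\rightarrow 1}\ \le\ \big\|\Eb_k\mathcal E_{U_k}^{(t)}-T^{(t)}\big\|_{1\rightarrow 1}\ <\ \epsilon .
\]

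For exact designs, \eqref{design} amounts to the identity of channels $\Eb_k\mathcal E_{U_k}^{(t)}=T^{(t)}$, so the bound above holds with $\epsilon=0$ and the scheme is perfectly secure; the remaining, genuinely substantive step is security against side information. I would fix an auxiliary space $\Ha_E$ and $\rho_{ME}\in\D(\Sym(d^t)\otimes\Ha_E)$ and show $(\Eb_k\mathsf{Enc}_k\otimes\one_E)(\rho_{ME})=\tau_{\Sym}\otimes\rho_E$, which equals $(\langle\tau_{\Sym}\rangle\otimes\one_E)(\rho_{ME})$. Using $\Eb_k\mathcal E_{U_k}^{(t)}=T^{(t)}$, the left-hand side is $\int (U^{\otimes t}\otimes\one_E)\,\rho_{ME}\,(U^{\otimes t}\otimes\one_E)^{\dagger}\,dU$; since $U^{\otimes t}$ commutes with $\Pi_{\Sym}$, this is exactly the $\U(d)$-twirl of the representation $U\mapsto U^{\otimes t}\otimes\one_E$ acting on $\Sym(d^t)\otimes\Ha_E$. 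Because the symmetric subspace is an \emph{irreducible} $\U(d)$-representation under $U\mapsto U^{\otimes t}$ --- the fact reflected in \eqref{SWd}, where the twirl of any $\rho\in\D(\Sym(d^t))$ is the single state $\tau_{\Sym}$ --- the space $\Sym(d^t)\otimes\Ha_E$ is a direct sum of copies of one irrep with multiplicity space $\Ha_E$. Expanding $\rho_{ME}=\sum_a A_a\otimes B_a$ with $A_a\in\Lc(\Sym(d^t))$, $B_a\in\Lc(\Ha_E)$, twirling each term factorwise, and applying Schur's lemma on the $\Sym(d^t)$ factor gives $\int U^{\otimes t}A_a(U^{\otimes t})^{\dagger}\,dU=\frac{\Tr(A_a)}{d_{\Sym}}\Pi_{\Sym}=\Tr(A_a)\,\tau_{\Sym}$, so the twirl equals $\tau_{\Sym}\otimes\sum_a\Tr(A_a)B_a=\tau_{\Sym}\otimes\Tr_M(\rho_{ME})=\tau_{\Sym}\otimes\rho_E$. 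Hence $\big\|(\Eb_k\mathsf{Enc}_k-\langle\tau_{\Sym}\rangle)|_{\Sym(d^t)}\big\|_{\diamond}=0$.

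I expect the last paragraph to be where the real content lies: the point is that twirling \emph{only} the message register, whose state is constrained to the symmetric subspace, already destroys every correlation with the side-information register and leaves the product state $\tau_{\Sym}\otimes\rho_E$. This is the same mechanism that makes the ordinary quantum one-time pad secure against side information (cf.\ \eqref{QOTP}), transported to the symmetric subspace, with the irreducibility of the symmetric representation of $\U(d)$ playing the role that the Pauli group plays there; everything else is bookkeeping with norms and definitions.
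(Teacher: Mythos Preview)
Your proposal is correct and, for correctness and $\epsilon$-indistinguishability, essentially identical to the paper's proof.

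For the side-information part (exact designs), the paper takes a slightly different computational route: it restricts to a pure bipartite state $\ket{\psi}\in\Ha_A\otimes\Sym(d^t)$ (which suffices for the diamond norm by convexity), writes its Schmidt decomposition $\ket{\psi}=\sum_i\lambda_i\ket{a_i}\otimes\ket{\varphi_i}$, and then applies the explicit decomposition \eqref{SWd} to each term $\ketbra{\varphi_i}{\varphi_j}$. The cross terms $i\neq j$ vanish because $\Tr(\Pi_{\Sym}\ketbra{\varphi_i}{\varphi_j})=\braket{\varphi_j}{\varphi_i}=0$ and the $\Pi_b$-terms vanish since $\ket{\varphi_i}\in\Sym(d^t)$, leaving $\sum_i|\lambda_i|^2\ketbra{a_i}{a_i}\otimes\tau_{\Sym}$. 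Your argument instead expands a general $\rho_{ME}$ as $\sum_a A_a\otimes B_a$ and invokes Schur's lemma directly on each $A_a$. Both routes rest on the same fact --- irreducibility of $U\mapsto U^{\otimes t}|_{\Sym(d^t)}$ --- so this is a presentation difference rather than a genuinely different approach; your version handles mixed states without purification and is a bit more streamlined, while the paper's version makes the vanishing of the off-diagonal correlations more explicit.
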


\begin{proof}
The fact that $\mathsf{Enc}_k$ and $\mathsf{Dec}_k^{\otimes t}$ are inverses of each other automatically shows correctness. Denote  $T^{(t)}$ the $t$-twirling channel $T^{(t)}(\rho)=\int_{\U(d)} U^{\otimes t}\rho (U^{\dagger})^{\otimes t} dU$. For $\rho\in\D(\Sym(d^t))$, $T^{(t)}(\rho)=\tau_{\Sym}$, that is, $T^{(t)}|_{\Sym(d^t)}=\langle\tau_{\Sym}\rangle|_{\Sym(d^t)}$, thus using the definition of approximate $t$-designs we get

\begin{equation*}
	\begin{aligned}
	\left\Vert \left.\left(\Eb_{k\in K} \mathsf{Enc}_k - \langle\tau_{\Sym}\rangle\right)\right|_{\Sym(d^t)} \right\Vert_{1\rightarrow 1}
	&=\left\Vert \left.\left( \Eb_{k\in K} \mathsf{Enc}_k - T^{(t)}\right)\right|_{\Sym(d^t)}  \right\Vert_{ 1\rightarrow 1}\\
	&\leq \left\Vert \Eb_{k\in K} \mathsf{Enc}_k  - T^{(t)} \right\Vert_{ 1\rightarrow 1}< \epsilon.
\end{aligned}
\end{equation*}

Consider now the security against side information for the case of exact unitary $t$-designs. Suppose the plaintext to be encrypted is $\ket{\psi}\in\Ha_A \otimes \Sym(d^t)$, where $A$ is the auxiliary space. This can be written as
\begin{equation}
\begin{split}
\ket{\psi} & = \sum_{i=1}^{D} \lambda_i \ket{a_i} \otimes \ket{\varphi_i} \\
\ketbra{\psi}{\psi} & = \sum_i \sum_j \lambda_i \lambda_j^* \ketbra{a_i}{a_j} \otimes \ketbra{\varphi_i}{\varphi_j},
\end{split}
\end{equation}
using the Schmidt decomposition, where $\ket{a_i}$ and $\ket{\varphi_i}$ are orthonormal states for $\Ha_A$ and $\Sym(d^t)$, respectively. The $\lambda_i$ values are non-negative real numbers such that $\sum_i \lambda_i^2 = 1$.

Applying $\one_A\otimes\mathsf{Enc}_k$ to $\ketbra{\psi}{\psi}$ and taking the expectation gives
\begin{equation}
\begin{split}
& \sum_i \sum_j \lambda_i \lambda_j^* \ketbra{a_i}{a_j} \otimes \sum_{k\in K} w(U_k) U_k^{\otimes t} \ketbra{\varphi_i}{\varphi_j} (U_k^{\dagger})^{\otimes t} \\
& = \sum_i \sum_j \lambda_i \lambda_j^* \ketbra{a_i}{a_j} \otimes \int_{\U(d)} U^{\otimes t} \ketbra{\varphi_i}{\varphi_j} (U^{\dagger})^{\otimes t} dU \\
& = \sum_i \sum_j \lambda_i \lambda_j^* \ketbra{a_i}{a_j} \otimes \Big(\tr(\Pi_{\Sym}\ketbra{\varphi_i}{\varphi_j} \Pi_{\Sym})\tau_{\Sym} + \sum_b \tr(\Pi_b\ketbra{\varphi_i}{\varphi_j} \Pi_b)\tau_b\Big).
\end{split}
\end{equation}
The second equality follows from \cref{SWd}, whose notation is explained in \cref{sec:rep theo}.
This $\tr(\Pi_{\Sym} \ketbra{\varphi_i}{\varphi_j} \Pi_{\Sym}) = \mel{\varphi_j}{\Pi_{\Sym}\Pi_{\Sym}}{\varphi_i}$ will equal 0 when $i\neq j$ since $\ket{\varphi_i}$ and $\ket{\varphi_j}$ are orthonormal. For $\tr(\Pi_b \ketbra{\varphi_i}{\varphi_j}\Pi_b)$, this will always equal zero because $\ket{\varphi_i}, \ket{\varphi_j} \in \Sym(d^t)$ which is orthogonal to subspace~$b$, and so $\Pi_b$ applied to these states will give zero. Therefore the only terms that remain are when $i=j$, which gives
\begin{equation}
\begin{split}
& \sum_i |\lambda_i|^2 \ketbra{a_i}{a_i} \otimes \int_{\U(d)} U^{\otimes t} \ketbra{\varphi_i}{\varphi_i} (U^{\dagger})^{\otimes t} dU \\
& = \sum_i |\lambda_i|^2 \ketbra{a_i}{a_i} \otimes \tau_{\Sym},
\end{split}
\end{equation}
and this $\tau_{\Sym}$ is independent of $i$. This implies that the encrypted plaintext will always look the same, regardless of what the adversary has as side information.
This implies
 \begin{equation*}
	\left\Vert \Big(\Eb_{k\in K} \left.\mathsf{Enc}_k - \langle\tau_{\Sym}\rangle\Big) \right|_{\Sym(d^t)}\right\Vert_{\diamond} = 0.
\end{equation*}
\end{proof}

\begin{rmk}
	Quantum Private Broadcasting with designs for $t$-recipients cannot be used to broadcast states of the form $\nu^{\otimes t}\notin \D(\Sym(d^t))$.  Consider for example, the totally mixed state $\tau=\frac{\one}{2}\otimes \frac{\one}{2} \in\D(\Ha_{d^t})$ for $d=t=2$.
	The averaged encryption of $\tau$ is naturally
	$\mathbb{E}_{k\in K} \mathsf{Enc}_k(\tau)=\tau$. On the other hand, any state $\rho_0\in \D(\Sym(2^2))$ is mapped to $\mathbb{E}_{k\in K}\mathsf{Enc}_k(\rho_0)=\tau_\Sym$, the maximally mixed state in the symmetric subspace. Clearly $\frac{\one}{4} \neq \tau_{\Sym}$ because when $d=t=2$,
\begin{equation}
\begin{split}
\tau_{\Sym} & = \frac{\Pi_{\Sym}}{d_{\Sym}} \\
& = \frac{\Pi_{\Sym}}{3} \neq \frac{\one}{4},
\end{split}
\end{equation}	
and for any state replacement channel $\langle \sigma \rangle$,
	\begin{equation*}
		\begin{aligned}
		\left\Vert \Big(\Eb_{k\in K}\mathsf{Enc}_{k} - \langle\sigma\rangle\Big) \right\Vert_{1\rightarrow 1} &=\max_{\rho\in \D(\C(2^2))} \left\Vert \Eb_{k\in K}\mathsf{Enc}_{k}(\rho) - \langle\sigma\rangle(\rho) \right\Vert_{1}  \\
		&\geq \frac 1 2 \left(\left\Vert \Eb_{k\in K}\mathsf{Enc}_{k}(\tau) - \langle\sigma\rangle(\tau) \right\Vert_{1} + \left\Vert \Eb_{k\in K}\mathsf{Enc}_{k}(\rho_0) - \langle\sigma\rangle(\rho_0) \right\Vert_{1}\right) \\
		&\geq \frac{1}{2} \left(\left\Vert \Eb_{k\in K}\mathsf{Enc}_k(\tau) - \Eb_{k\in K}\mathsf{Enc}_k(\rho_0)\right\Vert_1\right) \\
		&\geq \frac 1 2 \left\Vert \tau - \tau_{\Sym}\right\Vert_1 \geq \frac{1}{4}.
		\end{aligned}
	\end{equation*}
	This does not fulfill a generalized version of security following \cref{defn:QPB-indistinguishable}, and therefore supports why we restrict our input to the symmetric subspace in the definitions of security and correctness for $t$-QPB. Furthermore, we can insert a pre-broadcasting stage into the $t$-QPB where we perform a projective measurement $\{\tau_{\Sym}, \one-\tau_{\Sym}\}$ to determine whether or not our state is in the symmetric subspace. The state provided by an adversary is either projected into the symmetric subspace, whose action leaves symmetric states unchanged, or it is projected into a subspace orthogonal to the symmetric subspace. In the first case, the state is symmetric and the $t$-QPB is secure, as explained above. In the second case, the projective measurement result indicates that the state is not symmetric, and the encryption protocol is aborted, thus avoiding scenarios where the $t$-QPB is not secure.
\end{rmk}

We are interested in the key length required for the $t$-QPB, and we can compare the bounds from \cref{qotp bounds} to those in \cref{bounds}. One can see that the upper bounds for unitary $t$-designs are better than the number of unitaries needed for $\mathsf{QOTP}_{a_i,b_i}^{\otimes t}$ when $t$ is very large. The reason for this is because if one fixes the dimension $d$ and allows $t$ to increase, the order of unitaries needed for a $t$-design is polynomial in $t$, while the QOTP is exponential in $t$.
See \cref{3key20} for the comparison of the classical bit key length when $d=2$ and $t=1,\dots,20$.

\section{Symmetric Unitary $t$-designs}
\label{sym sec}

Motivated by the fact that we are only working in the symmetric subspace, we propose the new concept of symmetric unitary $t$-designs, which are a relaxation of $t$-designs. Namely, they are a discrete set of unitaries together with a probability distribution that mimics the action of the Haar measure \emph{in the symmetric subspace}.

\begin{dfn}
Let $\{U_k\}_{k\in K}$ be a finite subset of $\U(d)$ and let $w:\{U_k\}_{k\in K}\ra \R$ be a positive weight function such that $w(U_k)\geq 0$ and $\sum_{k\in K}w(U_k)=1$. Then $\mathfrak U =\left(w,\{U_k\}_{k\in K}\right)$ is called an \emph{$\epsilon$-approximate symmetric unitary $t$-design} if
	\begin{equation}
			\left \Vert\left.\left(\mathbb{E}_{\mathfrak{U}}\left[ \mathcal E_{U_k}^{(t)}\right]-\left\langle \tau_{\Sym}\right\rangle \right) \right|_{\Sym(d^t)}\right\Vert_{1\rightarrow 1 }<\epsilon ,
		\end{equation}
where $ \mathcal E_{U_k}^{(t)}(\rho)=U_k^{\otimes t}\rho (U_k^\dagger)^{\otimes t}$.
\label{sym design}
\end{dfn}

Note that $\left\langle \tau_{\Sym}\right\rangle$ is equal to $T^{(t)}$, the $t$-twirling channel $T^{(t)}(\rho)=\int_{\U(d)}  U^{\otimes t}\rho (U^\dagger)^{\otimes t}d U$, for symmetric states $\rho\in \D(\Sym(d^t))$ and the integral is over the whole unitary group with respect to the Haar measure.

We now connect symmetric unitary $t$-designs with perfect $t$-QPB schemes.

\begin{cor}
	Let $\mathfrak U=(w,\{U_k\}_{k\in K})$ be an $\epsilon$-approximate symmetric unitary $t$-design. Then the set of maps $\mathsf{Enc}_k(\rho)=U_k^{\otimes t}\rho(U_k^{\otimes t})^ \dagger$ and its local inverse maps $\mathsf{Dec}_k(\gamma)=U_k^ \dagger \gamma U_k$ for $k \in K$, $\rho\in\D(\Sym(d^t))$, and $\gamma\in\D(\Ha_d)$ form a perfect $t$-QPB which has $\epsilon$-indistinguishable ciphertexts. Moreover, in the case of exact symmetric unitary $t$-designs we have a perfect $t$-QPB perfectly secure against adversaries with side information.
\end{cor}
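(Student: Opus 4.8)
The plan is to follow the proof of \cref{secureQPB} almost verbatim, substituting the defining property of a symmetric unitary $t$-design (\cref{sym design}) for the Schur--Weyl identity \cref{SWd} used there. Perfect correctness is immediate: $\mathsf{Dec}_k^{\otimes t}\circ\mathsf{Enc}_k$ restricts to the identity on $\Sym(d^t)$ because $(U_k^\dagger)^{\otimes t}U_k^{\otimes t}=\one$, so the diamond-norm condition in the definition of a $\delta$-correct $t$-QPB holds with $\delta=1$. For $\epsilon$-indistinguishability, note that $\mathsf{Enc}_k=\mathcal E_{U_k}^{(t)}$ and that $\langle\tau_{\Sym}\rangle$ (with $\tau_{\Sym}\in\D(\Ha_d^{\otimes t})$) is an admissible choice of the state-replacement channel $\langle\sigma\rangle$ in \cref{defn:QPB-indistinguishable}; then the defining inequality of an $\epsilon$-approximate symmetric unitary $t$-design is literally $\bigl\Vert(\Eb_{k}\mathsf{Enc}_k-\langle\tau_{\Sym}\rangle)|_{\Sym(d^t)}\bigr\Vert_{1\to1}<\epsilon$, which is exactly the required bound. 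So these two assertions amount to unwinding the definitions.

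For the \emph{exact} case with side information I would replay the Schmidt-decomposition argument of \cref{secureQPB}. Take a purification $\ket{\psi}\in\Ha_A\otimes\Sym(d^t)$ and write $\ket{\psi}=\sum_i\lambda_i\ket{a_i}\otimes\ket{\varphi_i}$ with $\{\ket{a_i}\}$ and $\{\ket{\varphi_i}\}$ orthonormal and each $\ket{\varphi_i}\in\Sym(d^t)$; apply $\one_A\otimes\mathsf{Enc}_k$ to $\ketbra{\psi}{\psi}$ and average over $k$. Writing $\Phi:=\Eb_{\mathfrak U}[\mathcal E_{U_k}^{(t)}]$, everything reduces to evaluating $\Phi(\ketbra{\varphi_i}{\varphi_j})$. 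Here is where the hypothesis enters: for an exact symmetric unitary $t$-design, $\Phi(\rho)=\tau_{\Sym}=\langle\tau_{\Sym}\rangle(\rho)$ for every $\rho\in\D(\Sym(d^t))$, and since the density operators supported on $\Sym(d^t)$ span $\Lc(\Sym(d^t))$ over $\C$, linearity upgrades this to $\Phi(A)=\tr(A)\,\tau_{\Sym}$ for every operator $A$ supported on $\Sym(d^t)$. Applying this with $A=\ketbra{\varphi_i}{\varphi_j}$ gives $\Phi(\ketbra{\varphi_i}{\varphi_j})=\langle\varphi_j|\varphi_i\rangle\,\tau_{\Sym}=\delta_{ij}\,\tau_{\Sym}$, so all cross terms vanish and the averaged ciphertext is $\sum_i|\lambda_i|^2\ketbra{a_i}{a_i}\otimes\tau_{\Sym}=\psi_A\otimes\tau_{\Sym}$, which is independent of the side-information register $A$. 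Hence $\bigl\Vert(\Eb_k\mathsf{Enc}_k-\langle\tau_{\Sym}\rangle)|_{\Sym(d^t)}\bigr\Vert_\diamond=0$.

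The one step that deserves care — and the only place the argument differs structurally from \cref{secureQPB} — is the passage from the design identity, stated for density operators in the symmetric subspace, to an identity on all of $\Lc(\Sym(d^t))$, which is what is needed to kill the off-diagonal operators $\ketbra{\varphi_i}{\varphi_j}$. This is pure linearity, but if one prefers to avoid appealing to it abstractly one can instead use polarization to write $\ketbra{\varphi_i}{\varphi_j}$ as a complex combination of rank-one projectors onto vectors of $\Sym(d^t)$ and apply $\Phi$ termwise; either way, the restriction ``$|_{\Sym(d^t)}$'' in \cref{sym design} must be read as restricting the domain to operators supported on $\Sym(d^t)$. Everything else is identical bookkeeping to the proof of \cref{secureQPB}.
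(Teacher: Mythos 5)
Your proposal is correct and follows essentially the same route as the paper, whose entire proof is the observation that the argument for \cref{secureQPB} only uses the property that symmetric unitary $t$-designs satisfy by definition. The one point you elaborate on --- extending the design identity from density operators on $\Sym(d^t)$ to all of $\Lc(\Sym(d^t))$ by linearity so as to kill the off-diagonal terms $\ketbra{\varphi_i}{\varphi_j}$ --- is a genuine detail that the paper leaves implicit, and your treatment of it is sound.
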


\begin{proof}
Note that the only properties of approximate or exact unitary $t$-designs we are using in the proof of \cref{secureQPB} is the one fulfilled by their correspondent symmetric unitary $t$-designs.
\end{proof}

This shows that symmetric unitary $t$-designs give perfect $t$-QPB schemes. Moreover, every perfect $t$-QPB comes from a symmetric unitary $t$-design. Indeed, as discussed after the definition of $t$-QPB schemes in \cref{qpb sec}, perfect $t$-QPB must necessarily be implemented via unitary matrices and with local identical decryption unitaries $U_k$. Encryption can be performed with a general unitary for each $U_k$, but its action over the symmetric subspace should be exactly the same as $(U_k^\dagger)^{\otimes t}$. So mathematically, the $t$-QPB comes from a symmetric unitary $t$-design.

Hence, \cref{QPBt} can be rephrased in terms of symmetric unitary $t$-designs.
\begin{lma}
	Let $\mathfrak U =\left(w,\{U_k\}_{k\in K}\right)$ be a symmetric unitary $t$-design, then $\mathfrak U $ is a symmetric unitary $(t-1)$-design.
\end{lma}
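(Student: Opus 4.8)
The plan is to mirror the proof of \cref{QPBt}: reduce to pure tensor-power inputs via \cref{DensitySym}, append one extra copy to land in $\Sym(d^t)$, apply the (exact) symmetric $t$-design property there, and then trace out the extra copy. Concretely, write $\tau_{\Sym,t}$ for the maximally mixed state in $\D(\Sym(d^t))$; what has to be shown is that $\Eb_{\mathfrak U}\big[\mathcal E_{U_k}^{(t-1)}\big]$ coincides with $\langle\tau_{\Sym,t-1}\rangle$ on $\D(\Sym(d^{t-1}))$. Both maps are linear, so by \cref{DensitySym} it is enough to check equality on states $\rho = (\ketbra{\varphi}{\varphi})^{\otimes(t-1)}$ with $\ket\varphi\in\Ha_d$.

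For such a $\rho$, note that $(\ketbra{\varphi}{\varphi})^{\otimes t} = \rho\otimes\ketbra{\varphi}{\varphi}\in\D(\Sym(d^t))$. Since each $U_k^{\otimes t}$ factorizes across the tensor factors and $\tr\big(U_k\ketbra{\varphi}{\varphi}U_k^\dagger\big)=1$, and since the partial trace is linear and so commutes with the finite average, we get
$$\Eb_{\mathfrak U}\big[\mathcal E_{U_k}^{(t-1)}(\rho)\big] = \tr_t\Big(\Eb_{\mathfrak U}\big[\mathcal E_{U_k}^{(t)}\big((\ketbra{\varphi}{\varphi})^{\otimes t}\big)\big]\Big).$$
Applying the exact symmetric $t$-design property of $\mathfrak U$ to the symmetric state $(\ketbra{\varphi}{\varphi})^{\otimes t}$ gives $\Eb_{\mathfrak U}\big[\mathcal E_{U_k}^{(t)}\big((\ketbra{\varphi}{\varphi})^{\otimes t}\big)\big] = \tau_{\Sym,t}$. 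Finally $\tr_t(\tau_{\Sym,t}) = \tau_{\Sym,t-1}$ — the same fact used in \cref{QPBt}, which follows from \cref{SWd} because $\tau_{\Sym,t} = \int_{\U(d)} U^{\otimes t}(\ketbra{\varphi}{\varphi})^{\otimes t}(U^\dagger)^{\otimes t}\,dU$ and tracing out one copy leaves $\int_{\U(d)} U^{\otimes(t-1)}(\ketbra{\varphi}{\varphi})^{\otimes(t-1)}(U^\dagger)^{\otimes(t-1)}\,dU = \tau_{\Sym,t-1}$. Chaining these equalities gives $\Eb_{\mathfrak U}[\mathcal E_{U_k}^{(t-1)}(\rho)] = \tau_{\Sym,t-1} = \langle\tau_{\Sym,t-1}\rangle(\rho)$, and the reduction above extends this to all of $\D(\Sym(d^{t-1}))$.

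There is no real obstacle here; the only points requiring care are that the discarded copy is the ``extra'' one (and, by permutation symmetry of $\Sym(d^t)$, the choice does not matter) and that the partial trace legitimately passes through $\Eb_{\mathfrak U}$. An even quicker alternative is to invoke the correspondence just established: by the preceding corollary $\mathfrak U$ defines a perfect $t$-QPB of the form $(\mathsf{Enc}_k^{(t)},\mathsf{Dec}_k)$ with $\mathsf{Enc}_k^{(t)}(\rho)=U_k^{\otimes t}\rho(U_k^{\otimes t})^\dagger$ and $\mathsf{Dec}_k(\gamma)=U_k^\dagger\gamma U_k$; \cref{QPBt} then yields that $(\mathsf{Enc}_k^{(t-1)},\mathsf{Dec}_k)$ is a perfect $(t-1)$-QPB, and unwinding the definition of its security is exactly the assertion that $\mathfrak U$ is a symmetric unitary $(t-1)$-design.
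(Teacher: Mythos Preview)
Your proposal is correct and matches the paper's approach. The paper does not give a separate proof of this lemma at all: it simply presents the statement as a rephrasing of \cref{QPBt} via the correspondence, established just before, between exact symmetric unitary $t$-designs and perfectly secure, perfectly correct $t$-QPB schemes of the form $\mathsf{Enc}_k^{(t)}(\rho)=U_k^{\otimes t}\rho(U_k^{\otimes t})^\dagger$, $\mathsf{Dec}_k(\gamma)=U_k^\dagger\gamma U_k$ --- which is precisely your ``even quicker alternative.'' Your longer first argument is just the proof of \cref{QPBt} unwound in the language of designs, so there is no genuine difference in route.
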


We now give lower and upper bounds for exact symmetric unitary $t$-designs.
\begin{lma}
A symmetric unitary $t$-design has at least $d_{\Sym}^2$ unitaries.
\end{lma}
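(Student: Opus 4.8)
The plan is to count dimensions of the real span of the relevant operators. Recall that an (exact) symmetric unitary $t$-design is a weighted set $\mathfrak U = (w, \{U_k\}_{k\in K})$ with
\[
\mathbb E_{\mathfrak U}\!\left[\mathcal E_{U_k}^{(t)}\right]\Big|_{\Sym(d^t)} = \langle \tau_{\Sym}\rangle\big|_{\Sym(d^t)} .
\]
Equivalently, viewing each $\mathcal E_{U_k}^{(t)}$ restricted to $\Sym(d^t)$ as a superoperator on $\Lc(\Sym(d^t))$ — which, since $U_k^{\otimes t}$ preserves the symmetric subspace and acts there as some unitary $V_k \in \U(\Sym(d^t))$ of size $d_{\Sym}\times d_{\Sym}$, is just conjugation by $V_k$ — the design condition says $\sum_k w(U_k)\, (V_k \otimes \overline{V_k})$ equals the corresponding twirl over $\U(d_{\Sym})$ restricted to the image. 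So the first step is to reduce the statement to: a weighted set of unitaries $\{V_k\}$ in $\U(d_{\Sym})$ whose conjugation channels average to $\langle \tau_{\Sym}\rangle$ must have $|K| \geq d_{\Sym}^2$.

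The second step is the dimension-counting argument, which is the standard lower-bound technique for $1$-designs (and is exactly the cited fact that the QOTP on an $m$-dimensional space needs $m^2$ unitaries, here with $m = d_{\Sym}$). The map $\rho \mapsto \tau_{\Sym}\Tr(\rho)$ on $\Lc(\Sym(d^t))$ is the completely depolarizing channel on a $d_{\Sym}$-dimensional space. Write each conjugation superoperator as a vector $|V_k\otimes \overline{V_k}\rangle\!\rangle$ in the $d_{\Sym}^4$-dimensional space of superoperators; the constraint is that a convex combination of these equals the fixed superoperator $\Phi_{\mathrm{dep}}$. I would argue that $\Phi_{\mathrm{dep}}$ cannot be written as a convex (or even affine) combination of fewer than $d_{\Sym}^2$ of the operators $V\otimes\overline V$. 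The cleanest route: the conjugation channels $\rho\mapsto V\rho V^\dagger$ are unital and trace-preserving, and the depolarizing channel is an extreme-point-type object in the sense that its Choi matrix is $\tfrac{1}{d_{\Sym}}\,\one \otimes \one$ (up to normalization) — equivalently $\tfrac{1}{d_{\Sym}}\one_{d_{\Sym}^2}$ — which has full rank $d_{\Sym}^2$. Since the Choi matrix of $\rho\mapsto V\rho V^\dagger$ is the rank-one projector onto $(V\otimes \one)|\Omega\rangle$ (a maximally entangled vector), and the Choi map is linear, $\tfrac{1}{d_{\Sym}}\one_{d_{\Sym}^2} = \sum_k w(U_k)\,|{\tilde\Omega_k}\rangle\!\langle{\tilde\Omega_k}|$ for unit vectors $\tilde\Omega_k$; a full-rank positive operator that is a convex combination of rank-one projectors needs at least as many terms as its rank, i.e. $|K|\geq \operatorname{rank}(\one_{d_{\Sym}^2}) = d_{\Sym}^2$.

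The third step is to make sure the reduction in step one is airtight, which I expect to be the main obstacle (everything else is routine). Two points need care. First, the discussion after the $t$-QPB definition only tells us the encryption maps act on $\Sym(d^t)$ as $V_k^\dagger{}^{\otimes t}$-type unitaries; I should confirm that the map $V \mapsto V\otimes\overline V$ from $\U(\Sym(d^t))$ to conjugation superoperators, composed with Choi, really does send the uniform/Haar twirl to the full-rank depolarizing Choi matrix — this is Schur's lemma applied inside the $d_{\Sym}$-dimensional space, since $\U(d)$ acting by $U^{\otimes t}$ maps onto a subgroup of $\U(\Sym(d^t))$ that is irreducible on $\Sym(d^t)$, so the twirl is still the projection onto the trivial isotypic component and equals the completely depolarizing channel there. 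Second, I should note the bound is for the number of distinct unitaries with nonzero weight, so without loss of generality all $w(U_k) > 0$, and distinct $U_k$ can collapse to equal $V_k$ only by scalar phases which do not change $V_k\otimes\overline{V_k}$ — but since we only need a lower bound on $|K|$, merging equal terms only helps, so the inequality $|K|\geq d_{\Sym}^2$ survives. Assembling these gives the claim.
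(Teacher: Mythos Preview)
Your proposal is correct and follows essentially the same approach as the paper: both reduce the symmetric unitary $t$-design to a unitary $1$-design on the $d_{\Sym}$-dimensional space $\Sym(d^t)$ via the restriction map $U\mapsto U^{\otimes t}|_{\Sym(d^t)}$, and then invoke the standard $m^2$ lower bound for $1$-designs in $\U(m)$ with $m=d_{\Sym}$. The only difference is cosmetic: the paper cites the $1$-design lower bound from \cite{RS09} (as the $t=1$ case of $\binom{d^2+t-1}{t}$), whereas you supply a direct proof of that bound via the Choi-rank argument (the depolarizing channel has Choi matrix of rank $d_{\Sym}^2$, while each unitary conjugation has rank-one Choi matrix), and you add some extra care about irreducibility and possible collisions $V_k=V_{k'}$ that the paper does not spell out.
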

\begin{proof}
A symmetric $t$-design in $\U(d)$ gives a $1$-design in~$ \U(\Sym(d^t))$ having a particular tensor product structure, via the map $U\in \U(d) \mapsto V_U=U^{\otimes t}|_{\Sym(d^t)} \in \U(\Sym(d^t))$, where $U^{\otimes t}|_{\Sym(d^t)}: \Sym(d^t)\rightarrow \Sym(d^t)$ is the restriction of $U^{\otimes t}$ to the symmetric subspace.

Therefore, a lower bound for the number of unitaries needed in a 1-design in $\U(\Sym(d^t))$ will also give a lower bound for those of a symmetric $t$-design in $\U(d)$. From \cref{bounds}, the lower bound for a $t$-design in $\U(d)$ is ${{d^2+t-1}\choose{t}}$. This implies that the lower bound on the number of unitaries for a symmetric 1-design is ${{d_{\Sym}^2+1-1}\choose{1}} = d_{\Sym}^2$. \looseness=-1
\end{proof}

\begin{lma}
	There are symmetric unitary $t$-designs formed by $n$ unitaries with $n\leq d_{\Sym}^4-2d_{\Sym}+3$ unitaries.
\end{lma}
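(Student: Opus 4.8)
The goal is an upper bound of the form $d_{\Sym}^4 - 2d_{\Sym} + 3$ on the number of unitaries in an exact symmetric unitary $t$-design. The natural strategy is to reduce this to a known upper bound for ordinary (weighted) $1$-designs in $\U(d_{\Sym})$, via the embedding $U \in \U(d) \mapsto V_U = U^{\otimes t}|_{\Sym(d^t)} \in \U(\Sym(d^t))$ used in the preceding lemma. The subtlety is that we cannot use an arbitrary $1$-design in $\U(d_{\Sym})$: the unitaries must have the special tensor-product form $V_U$. So the first step is to set up a Carathéodory-type argument directly in the real vector space spanned by the operators $\Eb_{\mathfrak U}[\mathcal E^{(t)}_{U_k}]$ restricted to $\Sym(d^t)$, rather than quoting an off-the-shelf $1$-design bound.

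First I would start from \emph{any} exact symmetric unitary $t$-design $\mathfrak U = (w, \{U_k\}_{k \in K})$ — which exists, e.g.\ take a full unitary $t$-design, or even the whole group with Haar measure approximated — and view the condition $\Eb_{\mathfrak U}[\mathcal E^{(t)}_{U_k}]|_{\Sym(d^t)} = \langle \tau_{\Sym}\rangle|_{\Sym(d^t)}$ as saying that the convex combination $\sum_k w(U_k)\, \Phi_{U_k}$ equals a fixed target point $\Phi_\star$, where $\Phi_U := \big(X \mapsto V_U X V_U^\dagger\big)$ is a point in the real vector space $\Lc_{\mathbb R}$ of Hermitian-preserving linear maps on $\Lc(\Sym(d^t))$ (equivalently, self-adjoint operators on $\Sym(d^t) \otimes \Sym(d^t)$, i.e.\ $d_{\Sym}^2 \times d_{\Sym}^2$ Hermitian matrices via Choi). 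The dimension of the ambient real space of $d_{\Sym}^2 \times d_{\Sym}^2$ Hermitian matrices is $d_{\Sym}^4$. Then I would apply Carathéodory's theorem: the target point $\Phi_\star$, lying in the convex hull of $\{\Phi_{U_k}\}_{k \in K}$ inside an affine subspace, can be written as a convex combination of at most $\dim + 1$ of the $\Phi_{U_k}$'s. To sharpen $d_{\Sym}^4 + 1$ down to $d_{\Sym}^4 - 2d_{\Sym} + 3$, the standard refinement is to note that the affine hull of the relevant points is not all of $\Lc_{\mathbb R}$: the channels $\Phi_U$ are unital and trace-preserving, and the map $X\mapsto V_U X V_U^\dagger$ fixes the identity and preserves trace, cutting the effective affine dimension. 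Trace preservation and unitality each impose $d_{\Sym}$ linear constraints, but they overlap in the component along the identity, so together they remove $2d_{\Sym} - 1$ dimensions; Carathéodory in an affine space of dimension $m$ needs $m+1$ points, giving $d_{\Sym}^4 - (2d_{\Sym}-1) + 1 = d_{\Sym}^4 - 2d_{\Sym} + 2$ — and I would need to track the exact bookkeeping (possibly there is one more fixed linear relation, or the count of independent constraints from unitality and trace-preservation is $2d_{\Sym}-2$) to land precisely on $d_{\Sym}^4 - 2d_{\Sym} + 3$. This matches the shape of the Roy–Scott bound ${d^2+t-1 \choose t}^2$ specialized correctly, so I would cross-check against their derivation.

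The key steps in order: (1) fix an arbitrary exact symmetric unitary $t$-design to serve as the "ambient" set whose convex hull contains the twirl; (2) identify the real affine space in which the points $\Phi_{U_k}$ and the target $\langle\tau_{\Sym}\rangle$ live, and compute its dimension accounting for trace-preservation and unitality; (3) invoke Carathéodory's theorem to extract a sub-design of size at most (that dimension) $+1$; (4) verify that the resulting sub-collection, with the Carathéodory weights, still satisfies \cref{sym design} with $\epsilon = 0$, which is immediate since the defining equation is linear in $w$; (5) do the arithmetic to confirm the bound is $d_{\Sym}^4 - 2d_{\Sym} + 3$.

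The main obstacle I anticipate is the exact dimension count in step (2): getting the precise number of independent linear constraints satisfied by \emph{all} maps $\Phi_U$ simultaneously (unitality, trace-preservation, Hermiticity-preservation, and any extra relations forced by the tensor-power structure $V_U = U^{\otimes t}|_{\Sym}$). It is easy to be off by one or two, and the target $d_{\Sym}^4 - 2d_{\Sym} + 3$ is a specific number that must come out exactly; I would reconcile this by comparing directly with the weighted upper-bound argument in~\cite{RS09}, which presumably gives $\binom{d_{\Sym}^2 + 1 - 1}{1}^2 = d_{\Sym}^4$ as a crude bound and then trims it using exactly these symmetry constraints on the representation.
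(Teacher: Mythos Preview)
Your strategy is essentially the paper's: embed via $U \mapsto V_U = U^{\otimes t}|_{\Sym(d^t)} \in \U(d_{\Sym})$, observe that the target channel $\langle\tau_{\Sym}\rangle$ lies in the convex hull of the conjugation maps $\Phi_U$, bound the dimension of the span of these maps, and apply Carath\'eodory. The paper phrases this in vectorized form, working with $A=\{U^{\otimes t}\otimes(\overline U)^{\otimes t}|_{\Sym\otimes\Sym}\}\subset B=\{V\otimes\overline V:V\in\U(d_{\Sym})\}$, which is exactly the Choi picture of your $\Phi_U$; it then quotes the Roy--Scott computation of $\dim\Hom(\U(d_{\Sym}),1,1)$ for the span of $B$ rather than rederiving it from channel constraints.

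The one concrete slip is your constraint count. Trace preservation says the partial trace of the Choi matrix equals the $d_{\Sym}\times d_{\Sym}$ identity; this is a Hermitian matrix equation and hence imposes $d_{\Sym}^2$ real linear constraints, not $d_{\Sym}$. Likewise for unitality. With the correct counting (and the overlap handled) one lands on the Roy--Scott value $\dim\Hom(\U(d_{\Sym}),1,1)=(d_{\Sym}^2-1)^2+1=d_{\Sym}^4-2d_{\Sym}^2+2$ for the linear span, so Carath\'eodory gives $d_{\Sym}^4-2d_{\Sym}^2+3$. Note that the lemma statement and proof in the paper write $2d_{\Sym}$ where \cref{sym bounds} writes $2d_{\Sym}^2$; the latter is what Roy--Scott actually yields, so you were chasing a typo. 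Your stated fallback, to ``reconcile by comparing directly with~\cite{RS09}'', is precisely what the paper does and would immediately fix the arithmetic.
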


\begin{proof}
The proof follows using the results from \cite{RS09} regarding the dimensions for sets of homogeneous polynomials and then applying Carath\'eodory's theorem.

A symmetric unitary design seen as a linear operator is an element of the convex hull of the set
\begin{equation}
	A = \{U^{\otimes t}\otimes (\overline U)^{\otimes t}|_{\Sym(d^t)\otimes \Sym(d^t)} : U\in\U(d)\}.
\end{equation}
Clearly, the convex hull of $A$, is a subset of the convex hull of $B = \{V\otimes \overline V : V\in\U(\Sym(d^t))\}$, where $V$ does not necessarily have the tensor product structure. The span of set $B$ has the same dimension as  $\Hom(\U(\Sym(d^t)),1,1)$, the set of homogenous polynomials of degree $1$  in the entries of $V$ and degree $1$ in the entries of $\overline V$ where $V\in\U(\Sym(d^t))$, whose dimension is $d_{\Sym}^4-2d_{\Sym}+2$ (see \cite{RS09}). Now applying Carath\'eodory's theorem, elements of the convex hull of $A$ can be written as convex combinations of at most $d_{\Sym}^4-2d_{\Sym}+3$ elements in $A$. Therefore, there exists a weighted symmetric unitary $t$-design of at most  $d_{\Sym}^4-2d_{\Sym}+3 \in O(d_{\Sym}^4)$ elements.
\end{proof}

This shows a gap between the lower and upper bounds in line with the results for unitary $t$-designs. The bounds are summarized in  \cref{sym bounds}.

We concentrate now in giving bounds on the number of unitaries needed for approximate symmetric unitary $t$-designs. We adapt the randomized construction of approximate unitary $t$-designs from \cite{LM20} \footnote{These results build on those of \cite{Aub09}. Note that we make explicit this $\log(1/\epsilon^2)$ term that is missing in the result of \cite{Aub09}.} to our case where we are only interested in the action of the set of unitary matrices over $\Sym(d^t)$, giving a construction almost linear in $d_{\Sym}$.

\begin{thm}
Let $0<\epsilon<1$. Let $\mathfrak U=(w,\{U_k\}_{k\in K})$ be a unitary $t$-design,  and let $U_1,\dots,U_n$ be sampled independently from $\mathfrak U$. Then there exists a universal constant $\alpha>0$ such that, if $n\geq \alpha \frac{d_{\Sym}}{\epsilon^2}\log(d_{\Sym})^6 \log(1/\epsilon^2)$ then with probability at least $\frac{1}{2}$,
\begin{equation}
\forall \; \rho \in \D(\Sym(d^t)), \;\; \left\Vert \frac{1}{n} \sum_{i=1}^n U_i^{\otimes t} \rho (U_i^{\dagger})^{\otimes t} - T^{(t)}(\rho) \right\Vert_{\infty} \; \leq \frac{\epsilon}{d_{\Sym}},
\end{equation}
where $T^{(t)}(\rho)$ is the symmetric $t$-twirling channel which maps $\rho\in\D(\Sym(d^t))$ to $\int_{\U(d)} U^{\otimes t}\rho (U^{\dagger})^{\otimes t} dU$ with respect to the Haar measure. In other words, $T^{(t)}(\rho)=\langle \tau_{\Sym}\rangle (\rho)=\tau_{\Sym}$ for $\rho \in \D(\Sym(d^t))$.
\label{Cec alt}
\end{thm}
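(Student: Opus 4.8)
The plan is to push the entire problem into the symmetric subspace, where by \cref{SWd} the $t$-twirl $T^{(t)}$ acts as the completely depolarizing channel $R\mapsto\Tr(R)\,\tau_{\Sym}$ with $\tau_{\Sym}=\one_{\Sym(d^t)}/d_{\Sym}$ the \emph{maximally mixed} state on a space of dimension $d_{\Sym}$, and then to run the randomized-construction argument of \cite{LM20} (itself a refinement of \cite{Aub09}) in dimension $d_{\Sym}$ rather than $d^t$. Working inside $\Sym(d^t)$ is exactly what collapses the ambient dimension from $d^t$ to $d_{\Sym}$ and turns the twirl into a depolarizing channel with scalar target, which is responsible for the improved $d_{\Sym}$ in the final bound.

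Concretely, first I would observe that $U^{\otimes t}$ commutes with every permutation operator $P_d(\pi)$ and hence preserves $\Sym(d^t)$, so $V_i:=U_i^{\otimes t}|_{\Sym(d^t)}$ is a genuine element of $\U(\Sym(d^t))$; since $\mathfrak U$ is an exact unitary $t$-design, $\Eb_{\mathfrak U}[V\rho V^{\dagger}]=T^{(t)}(\rho)=\tau_{\Sym}$ for every $\rho\in\D(\Sym(d^t))$ by \cref{SWd}. Because $\rho\mapsto\big\Vert\tfrac1n\sum_iV_i\rho V_i^{\dagger}-\tau_{\Sym}\big\Vert_{\infty}$ is convex on the compact convex set $\D(\Sym(d^t))$, its maximum is attained at an extreme point, namely a rank-one projector $\ketbra{\phi}{\phi}$ with $\ket{\phi}\in\Sym(d^t)$ a unit vector. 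Thus it suffices to bound, uniformly over all such $\ket{\phi}$, the operator norm of the mean-zero sum of the i.i.d.\ Hermitian $d_{\Sym}\times d_{\Sym}$ matrices $X_i:=V_i\ketbra{\phi}{\phi}V_i^{\dagger}-\tau_{\Sym}$ — a question purely about sums of independent random matrices.

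For a single fixed $\ket{\phi}$ I would apply matrix Bernstein: here $\|X_i\|_{\infty}\le 1$, and since $V_i\ketbra{\phi}{\phi}V_i^{\dagger}$ is a rank-one projector with expectation $\tau_{\Sym}$ one computes $\Eb[X_i^2]=\tau_{\Sym}-\tau_{\Sym}^2$, so $\big\Vert\sum_i\Eb[X_i^2]\big\Vert_{\infty}\le n/d_{\Sym}$; this yields $\Pr\big[\big\Vert\tfrac1n\sum_iX_i\big\Vert_{\infty}>\epsilon/(2d_{\Sym})\big]\le 2d_{\Sym}\exp(-\Omega(n\epsilon^2/d_{\Sym}))$. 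A crude $\eta$-net over the projective space of $\Sym(d^t)$, of cardinality $e^{O(d_{\Sym}\log(d_{\Sym}/\epsilon))}$, together with the Lipschitz estimate $\|V_i(\rho-\rho')V_i^{\dagger}\|_{\infty}\le\|\rho-\rho'\|_1$ and a union bound, already proves the statement, but only with $n=O\!\big(d_{\Sym}^2\log(d_{\Sym}/\epsilon)/\epsilon^2\big)$. To obtain the near-linear bound $n\gtrsim\tfrac{d_{\Sym}}{\epsilon^2}(\log d_{\Sym})^6\log(1/\epsilon^2)$ one must instead follow the chaining argument of \cite{Aub09,LM20}: after Rademacher symmetrization, control $\sup_{\ket{\phi}}\big\Vert\tfrac1n\sum_i\xi_iV_i\ketbra{\phi}{\phi}V_i^{\dagger}\big\Vert$ scale by scale via Rudelson's noncommutative Khintchine inequality, balancing the entropy numbers of the sphere of $\Sym(d^t)$ against the concentration obtained at each scale so that the dimensional prefactor drops from $d_{\Sym}^2$ to $d_{\Sym}\,\mathrm{polylog}(d_{\Sym})$, with careful accounting of the finest scale producing the extra $\log(1/\epsilon^2)$ factor left implicit in \cite{Aub09}. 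Throughout, sampling from a $t$-design rather than from the Haar measure is legitimate because every expectation that enters is that of a constant-coefficient polynomial of degree at most $t$ in the entries of $U_i$ and at most $t$ in their conjugates: in any product $V_i\cdots V_i^{\dagger}V_i\cdots V_i^{\dagger}$ the interior factors $V_i^{\dagger}V_i$ collapse to the identity, and scalars of the form $\mel{\phi}{V_i^{\dagger}\tau_{\Sym}V_i}{\phi}$ equal $\mel{\phi}{\tau_{\Sym}}{\phi}=1/d_{\Sym}$ since $\tau_{\Sym}$ is maximally mixed on $\Sym(d^t)$, so all such expectations coincide with their Haar values by \cref{SWd}.

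The main obstacle is precisely this last step. Ordinary matrix concentration plus an $\epsilon$-net gives only the quadratic bound $n\asymp d_{\Sym}^2$, and squeezing the dimension dependence down to $d_{\Sym}(\log d_{\Sym})^6$ genuinely requires the Aubrun/Rudelson chaining machinery rather than an off-the-shelf inequality; moreover one must re-examine that argument to verify — as sketched above — that it never invokes a moment of the $U_i$ of order exceeding $t$, so that the given unitary $t$-design may be substituted for the Haar measure at every stage.
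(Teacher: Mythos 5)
Your proposal is correct and follows essentially the same route as the paper: restrict to the symmetric subspace (equivalently, use the isometry onto $\Ha_{d_{\Sym}}$), note that the twirl there is the depolarizing channel with $\sup_{\rho}\Vert T^{(t)}(\rho)\Vert_{\infty}=1/d_{\Sym}$, and rerun the Rademacher-symmetrization/chaining argument of \cite{Aub09,LM20} (Aubrun's Lemma 5, which is \cref{repl Cec lem} here) in dimension $d_{\Sym}$ in place of $d^t$. Your additional remarks — the convexity reduction to pure symmetric states, the comparison with the cruder matrix-Bernstein-plus-net bound, and the check that only degree-$(t,t)$ moments of the $U_i$ are ever used so that an exact $t$-design may replace the Haar measure — are consistent with, and more explicit than, the paper's own (very terse) proof.
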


Note that, by relating the $\infty$ norm to the $1$ norm, \cref{Cec alt} gives an $\epsilon$-approximate symmetric unitary $t$-design and thus a perfectly correct $t$-QPB scheme which has $\epsilon$-indistinguishable ciphertexts.

The proof of \cref{Cec alt} follows similarly to \cite{LM20}, with altered bounds due to $\rho$ being in the symmetric subspace. To see this, we need the following result based from \cite{Aub09} Lemma 5, now adjusted so that $\rho\in\D(\Sym(d^t))$ and $U_i^{\otimes t}$ is being applied instead of simply $U_i$.

\begin{lma}
Let $U_1,\dots,U_n \in \U(d)$. For $\varepsilon_1,\dots,\varepsilon_n$ independent Bernoulli random variables, we have
\begin{equation}
\begin{split}
&\Eb \left(\sup_{\rho\in\D(\Sym(d^t))} \left\Vert \sum_{i=1}^n \varepsilon_i U_i^{\otimes t} \rho (U_i^{\dagger})^{\otimes t} \right\Vert_{\infty} \right) \\
& \;\;\;\; \leq \alpha(\log d_{\Sym})^{5/2} (\log n)^{1/2} \; \sup_{\rho\in\D(\Sym(d^t))} \left\Vert \sum_{i=1}^n U_i^{\otimes t} \rho (U_i^{\dagger})^{\otimes t} \right\Vert_{\infty}^{1/2},
\end{split}
\end{equation}
where $\alpha>0$ is a universal constant.
\label{repl Cec lem}
\end{lma}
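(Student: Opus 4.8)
The plan is to deduce the statement from its non-symmetric analogue, Lemma~5 of \cite{Aub09} (as used in \cite{LM20}), applied not on all of $\Ha_d^{\otimes t}$ but on the symmetric subspace, where the ambient dimension is $d_{\Sym}$ instead of $d^t$. The key observation is that $U_i^{\otimes t}$ commutes with every permutation action and hence leaves $\Sym(d^t)$ invariant, so $V_i := U_i^{\otimes t}|_{\Sym(d^t)}$ is a unitary on the $d_{\Sym}$-dimensional Hilbert space $\Sym(d^t)$. For any $\rho\in\D(\Sym(d^t))$ we have $U_i^{\otimes t}\rho(U_i^\dagger)^{\otimes t}=V_i\rho V_i^\dagger\in\D(\Sym(d^t))$, and the $\infty$-norm of an operator supported on $\Sym(d^t)$ equals that of its restriction. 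Therefore both sides of the asserted inequality are unchanged when $U_i^{\otimes t}(\cdot)(U_i^\dagger)^{\otimes t}$ is replaced by $V_i(\cdot)V_i^\dagger$, and the claim reduces to the following statement about $n$ unitaries $V_1,\dots,V_n$ on a Hilbert space of dimension $D=d_{\Sym}$, with the suprema over density operators on that $D$-dimensional space:
\begin{equation*}
\Eb\Big(\sup_{\rho}\Big\Vert\sum_{i=1}^n\varepsilon_i V_i\rho V_i^\dagger\Big\Vert_\infty\Big)\;\leq\;\alpha(\log D)^{5/2}(\log n)^{1/2}\,\sup_{\rho}\Big\Vert\sum_{i=1}^n V_i\rho V_i^\dagger\Big\Vert_\infty^{1/2}.
\end{equation*}
This is exactly Aubrun's lemma, and having made the reduction one can invoke it directly; working in $\Sym(d^t)$ is precisely what makes $D=\binom{d+t-1}{t}$ polynomial in $t$ rather than exponential.

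Since this lemma is the backbone of \cref{Cec alt}, it is worth recalling the structure of the proof of Aubrun's lemma that is being imported. First, writing $\rho=\sum_j p_j\ketbra{\psi_j}{\psi_j}$ in its spectral decomposition (with $\ket{\psi_j}\in\Sym(d^t)$, $p_j\geq 0$, $\sum_j p_j=1$) and using convexity of $\Vert\cdot\Vert_\infty$ together with the triangle inequality reduces the left-hand side to $\Eb\big(\sup_{\ket\psi}\Vert\sum_i\varepsilon_i\ketbra{V_i\psi}{V_i\psi}\Vert_\infty\big)$, the supremum being over unit vectors $\ket\psi\in\Sym(d^t)$; each $V_i\ket\psi$ is then a unit vector in the $d_{\Sym}$-dimensional space $\Sym(d^t)$. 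Second, for a \emph{fixed} $\ket\psi$, a noncommutative Khintchine (Rudelson selection) inequality bounds $\Eb\Vert\sum_i\varepsilon_i\ketbra{V_i\psi}{V_i\psi}\Vert_\infty$ by $C(\log d_{\Sym})^{1/2}\Vert\sum_i\ketbra{V_i\psi}{V_i\psi}\Vert_\infty^{1/2}$, and the square function there is at most $\sup_\rho\Vert\sum_i V_i\rho V_i^\dagger\Vert_\infty^{1/2}$, uniformly in $\ket\psi$. Third, one upgrades this fixed-$\psi$ estimate to the supremum over $\ket\psi$ by a chaining argument over a net of the unit sphere of $\Sym(d^t)$, estimating the increments $\ketbra{V_i\psi}{V_i\psi}-\ketbra{V_i\psi'}{V_i\psi'}$ with the same Khintchine-type inequality; this step produces the extra $(\log n)^{1/2}$ factor and raises the exponent of $\log d_{\Sym}$ from $1/2$ to $5/2$, following the bookkeeping in \cite{Aub09} and \cite{LM20}.

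The main obstacle is this third step: moving $\sup_{\ket\psi}$ outside the expectation without paying a factor polynomial in $d_{\Sym}$. A naive union bound over an $\epsilon$-net of the unit sphere of $\Sym(d^t)$, whose log-cardinality is of order $d_{\Sym}\log(1/\epsilon)$, would ruin the bound; the resolution (due to \cite{Aub09}) is to interleave the chaining with the noncommutative Khintchine estimate so that the square-function structure — and hence the dependence on $\sup_\rho\Vert\sum_i V_i\rho V_i^\dagger\Vert_\infty$ — is retained at every scale. Apart from this, the argument never uses that the $V_i$ are restrictions of tensor powers of a single unitary: it uses only that they are unitaries on a $d_{\Sym}$-dimensional space, which is exactly why $d^t$ can be replaced by $d_{\Sym}$ throughout, and this is the only change from \cite{Aub09,LM20}.
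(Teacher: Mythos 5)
Your reduction is exactly the paper's argument: both proofs observe that $U_i^{\otimes t}$ preserves the symmetric subspace, identify $\Sym(d^t)$ isometrically with a $d_{\Sym}$-dimensional Hilbert space on which the restrictions act as unitaries, and then invoke Aubrun's Lemma~5 verbatim with $d$ replaced by $d_{\Sym}$. Your additional recap of the internal structure of Aubrun's proof (Khintchine plus chaining) is accurate but not needed beyond the black-box invocation, which is all the paper does as well.
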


\begin{proof}
This proof follows from the proof in \cite{Aub09} since there exists an isometry that will map everything in $\Sym(d^t)$ to a complex Hilbert space $\Ha_{d_{\Sym}}$ of $d_{\Sym}$ dimensions. This isometry preserves scalar products and maps all non-symmetric elements to zero. There is therefore an isometry between $\D(\Sym(d^t))$ and $\D(\Ha_{d_{\Sym}})$, and Aubrun's Lemma 5 result can be applied, where $d$ is replaced with $d_{\Sym}$ and $U_i$ is now $U_i^{\otimes t}$.
\end{proof}

\cref{Cec alt} can now be proved by directly following the proof of \cite{LM20}, replacing Lemma 3.2 with the known fact of $\sup_{\rho\in\D(\Sym(d^t))}\left\Vert T^{(t)}(\rho)\right\Vert_{\infty} = \frac{1}{d_{\Sym}}$ and substituting \cref{repl Cec lem} for Lemma~3.3 in~\cite{LM20}.

From \cite{LM20}, their upper bound is $n\geq C(td)^t(t\log d)^6 /\epsilon^2$, while the upper bound from \cref{Cec alt} is $n\geq \alpha \frac{d_{\Sym}}{\epsilon^2}\log(d_{\Sym})^6 \log(1/\epsilon^2)$. As mentioned previously, the lower bound for symmetric unitary $t$-designs is $d_{\Sym}^2$, and this upper bound for $\epsilon$-approximate symmetric unitary $t$-designs is of order $d_{\Sym}$ along with a $\log d_{\Sym}$ term. The following lemma shows that this upper bound is optimal in $d_{\Sym}$ up to a sublinear term.

\begin{table}[H]
\centering
\begin{tabular}{|l|l|l|}
\hline
& \textbf{Lower} & \textbf{Upper} \\ \hline
\textbf{Exact} & $d_{\Sym}^2$ & $d_{\Sym}^4-2d_{\Sym}^2+3 \in O(d_{\Sym}^4)$ \\ \hline
\textbf{$\epsilon$-Approximate} & $(d_{\Sym})^{(1-\epsilon)}$ & $\alpha \frac{d_{\Sym}}{\epsilon^2}\log(d_{\Sym})^6 \log(1/\epsilon^2)$ \\ \hline
\end{tabular}
\caption{Bounds on the number of unitaries for Symmetric Unitary $t$-designs}
\label{sym bounds}
\end{table}

\begin{lma}
	An $\epsilon$-approximate symmetric unitary $t$-design has at least $(d_{\Sym})^{1-\epsilon}$ unitaries.
\end{lma}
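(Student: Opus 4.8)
The plan is to play off the rank of the averaged encryption of a single pure symmetric state against the rank of $\tau_{\Sym}$. Write $n=|K|$; since deleting the unitaries of weight zero only decreases $n$, I may assume $w(U_k)>0$ for all $k$, and I may assume $n<d_{\Sym}$, for otherwise $(d_{\Sym})^{1-\epsilon}<d_{\Sym}\le n$ already (the case $d_{\Sym}=1$ being trivial).

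First I would fix a unit vector $\ket{\varphi}\in\Ha_d$ and apply the averaged channel $\mathbb{E}_{\mathfrak U}[\mathcal E_{U_k}^{(t)}]$ to $\rho=(\ketbra{\varphi}{\varphi})^{\otimes t}\in\D(\Sym(d^t))$. Because $U^{\otimes t}$ commutes with every permutation operator $P_d(\pi)$, each $U_k^{\otimes t}$ maps $\Sym(d^t)$ into itself, so $\omega:=\mathbb{E}_{\mathfrak U}[\mathcal E_{U_k}^{(t)}](\rho)=\sum_{k\in K}w(U_k)\,U_k^{\otimes t}(\ketbra{\varphi}{\varphi})^{\otimes t}(U_k^{\dagger})^{\otimes t}$ is a density operator supported on a subspace $S\subseteq\Sym(d^t)$ with $\dim S\le n$, being a convex combination of $n$ rank-one projectors. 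I would then lower-bound the distance to $\tau_{\Sym}=\Pi_{\Sym}/d_{\Sym}$ via the support projector $P_S$: one has $\Tr(P_S\omega)=1$ and $\Tr(P_S\tau_{\Sym})=\Tr(P_S\Pi_{\Sym})/d_{\Sym}=\dim(S)/d_{\Sym}\le n/d_{\Sym}$, so testing the trace-zero Hermitian operator $\omega-\tau_{\Sym}$ against $2P_S-\one$ yields $\|\omega-\tau_{\Sym}\|_1\ge 2\bigl(1-n/d_{\Sym}\bigr)$.

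Next I would invoke \cref{sym design}: since $\langle\tau_{\Sym}\rangle(\rho)=\tau_{\Sym}$ for $\rho\in\D(\Sym(d^t))$ and the trace norm of the image of a normalized input under a map is at most that map's $1\rightarrow1$ norm, $\|\omega-\tau_{\Sym}\|_1\le\|(\mathbb{E}_{\mathfrak U}[\mathcal E_{U_k}^{(t)}]-\langle\tau_{\Sym}\rangle)|_{\Sym(d^t)}\|_{1\rightarrow1}<\epsilon$. Combining the two estimates gives $2(1-n/d_{\Sym})<\epsilon$, i.e. $n>d_{\Sym}(1-\epsilon/2)$. It then remains to check the elementary inequality $d_{\Sym}(1-\epsilon/2)\ge(d_{\Sym})^{1-\epsilon}$ for $d_{\Sym}\ge 2$ and $0<\epsilon<1$, equivalently $(d_{\Sym})^{-\epsilon}\le 1-\epsilon/2$: the left side is a convex function of $\epsilon$ equal to $1$ at $\epsilon=0$ and at most $1/2$ at $\epsilon=1$, hence it lies below the chord joining those endpoints, and that chord is $\le 1-\epsilon/2$ on $[0,1]$ precisely because $d_{\Sym}\ge2$. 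This gives $n>(d_{\Sym})^{1-\epsilon}$.

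I expect the only step carrying real content to be this rank-versus-full-rank observation; the rest is routine bookkeeping. The point to be careful about is retaining the factor $2$ in $\|\omega-\tau_{\Sym}\|_1\ge 2(1-n/d_{\Sym})$, since the cruder bound $\ge 1-n/d_{\Sym}$ would only give $n>d_{\Sym}(1-\epsilon)$, which fails to exceed $(d_{\Sym})^{1-\epsilon}$ when $\epsilon$ is close to $1$.
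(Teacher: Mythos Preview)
Your argument is correct and genuinely different from the paper's. The paper invokes an entropy--continuity result from \cite{LW17}: if $\hat T$ is $\epsilon$-close to $T$ in the $1\!\to\!1$ norm, then $\log r(\hat T)\ge(1-\epsilon)\max_\rho|S(T(\rho))-S(\rho)|$, and combines it with $\|T^{(t)}(\rho)\|_\infty=1/d_{\Sym}$ to force the Kraus rank of the averaged channel to be at least $(d_{\Sym})^{1-\epsilon}$. Your route is entirely elementary: a single pure input, the rank of a convex combination of $n$ rank-one projectors, and the variational characterisation of $\|\cdot\|_1$ via $2P_S-\one$. In fact you obtain the sharper linear bound $n>d_{\Sym}(1-\epsilon/2)$ before relaxing to $(d_{\Sym})^{1-\epsilon}$, whereas the entropy method only yields the sublinear exponent directly. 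The trade-off is generality: the paper's argument bounds the Kraus rank of \emph{any} channel approximating $T^{(t)}$ on $\Sym(d^t)$, not just mixtures of $t$-fold unitaries, while your rank bound on $\omega$ uses that specific structure. For the lemma as stated, your approach is both shorter and quantitatively stronger.
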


\begin{proof}
We adapt the arguments given in \cite{LW17} to our case. As proven in \cite{LW17}, if two quantum channels $T$ and $\hat{T}$ on $\Lc(\Ha_d)$ are $\epsilon$-close in the 1-norm, then the following is true
\begin{equation}
\log r(\hat{T}) \geq (1-\epsilon) \max_{\rho\in \D(\Ha_d)} \left|S(T(\rho)) - S(\rho) \right|,
\end{equation}
where $r(\hat{T})$ is the Kraus rank of $\hat{T}$ and $S(\cdot)$ is the von Neumann entropy.

In \cite{LM20} it is explained that if the quantum channel $T$ has the property that $\left\Vert T(\rho)\right\Vert_{\infty} \leq \frac{c}{d}$ for $\rho\in\D(\Ha_d)$, then it can be said that
\begin{equation}
\max_{\rho\in\D(\Ha_d)} \left| S(T(\rho)) - S(\rho) \right| \geq \log \left(\frac{d}{c}\right),
\end{equation}
which implies that $r(\hat{T}) \geq \left(\frac{d}{c}\right)^{(1-\epsilon)}$.

With respect to approximate symmetric unitary $t$-designs, it is known that for $\rho\in\D(\Sym(d^t))$, $\left\Vert T^{(t)}(\rho)\right\Vert_{\infty} = \frac{1}{d_{\Sym}}$. Therefore, if a quantum channel $\hat{T}^{(t)}$ is $\epsilon$-close to $T^{(t)}$ in the 1-norm, then the rank of Kraus operators for the channel $\hat{T}^{(t)}$ satisfies
\begin{equation}
r(\hat{T}^{(t)}) \geq (d_{\Sym})^{(1-\epsilon)},
\end{equation}
which gives a lower bound for the number of unitaries needed for an $\epsilon$-approximate symmetric unitary $t$-design.
\end{proof}

\section*{Acknowledgements}

The authors thank Cécilia Lancien for helpful discussions. A.B.~acknowledges support by the Air Force Office of Scientific Research under award number FA9550-20-1-0375, Canada's NFRF, Canada's NSERC, an Ontario ERA, and the University of Ottawa's Research Chairs program. C.G.~acknowledges financial support from Spanish MICINN (project MTM2017-88385-P) and from Comunidad de Madrid (grant QUITEMAD-CM, ref. S2018/TCS-4342). C.S.~acknowledges financial support from the Government of Ontario and from the University of Ottawa.

\appendix

\section{Data for \cref{3key20}}
\label{sec:appendix-data}

\begin{table}[H]
\centering
\begin{tabular}{|l|l|l|l|}
\hline
\textbf{$t$}& \textbf{QOTP} & \textbf{Weighted $t$-design} & \textbf{Symmetric Weighted $t$-design} \\ \hline
1 & 2 & 4 & 3.46 \\ \hline
2 & 4 & 6.64 & 6.04 \\ \hline
3 & 6 & 8.64 & 7.83 \\ \hline
4 & 8 & 10.26 & 9.17 \\ \hline
5 & 10 & 11.61 & 10.26 \\ \hline
6 & 12 & 12.78 & 11.17 \\ \hline
7 & 14 & 13.81 & 11.96 \\ \hline
8 & 16 & 14.73 & 12.64 \\ \hline
9 & 18 & 15.56 & 13.26 \\ \hline
10 & 20 & 16.32 & 13.81 \\ \hline
11 & 22 & 17.02 & 14.32 \\ \hline
12 & 24 & 17.66 & 14.78 \\ \hline
13 & 26 & 18.26 & 15.21 \\ \hline
14 & 28 & 18.82 & 15.61 \\ \hline
15 & 30 & 19.34 & 15.99 \\ \hline
16 & 32 & 19.84 & 16.34 \\ \hline
17 & 34 & 20.31 & 16.67 \\ \hline
18 & 36 & 20.75 & 16.98 \\ \hline
19 & 38 & 21.18 & 17.28 \\ \hline
20 & 40 & 21.58 & 17.56 \\ \hline
\end{tabular}
\caption{Classical bits for QOTP and upper bounds of classical bits for weighted $t$-design, symmetric weighted $t$-design when $d=2$}
\label{data:bits}
\end{table}

\bibliographystyle{bibtex/bst/alphaarxiv}
\bibliography{bibtex/bib/full,bibtex/bib/quantum}

\end{document}